\documentclass[10pt,twocolumn,twoside]{IEEEtran}
\IEEEoverridecommandlockouts
\hyphenation{op-tical net-works semi-conduc-tor}
\hyphenation{op-tical net-works semi-conduc-tor}
\usepackage{graphicx}
\usepackage{hyperref}
\usepackage{url}
\usepackage{array}
\usepackage{amsmath}
\usepackage{float}
\usepackage{multirow}
\usepackage{array}
\usepackage{cancel}
\usepackage{tabu}
\usepackage{amsthm}
\usepackage{epsfig}
\usepackage{epstopdf}
\usepackage{epsf}
\usepackage{color}
\usepackage[english]{babel}

\theoremstyle{definition}
\newtheorem{theorem}{Theorem}

\newtheorem{remark}{Remark}
\usepackage{amssymb}
\usepackage{cite}
\usepackage{amsmath,amssymb,amsfonts}
\usepackage{graphicx}
\usepackage{textcomp}
\usepackage{xcolor}
\usepackage{algorithm}
\usepackage[noend]{algpseudocode}
\usepackage[T1]{fontenc}
\usepackage{graphicx,lipsum,afterpage,subcaption}
\usepackage{enumitem}
\usepackage{lipsum}
\usepackage{mathtools}
\usepackage{cuted}
\usepackage[thinc]{esdiff}
\DeclareMathOperator*{\argmax}{arg\,max}


\def\BibTeX{{\rm B\kern-.05em{\sc i\kern-.025em b}\kern-.08em
    T\kern-.1em\lower.7ex\hbox{E}\kern-.125emX}}
    \addtolength{\topmargin}{0.05 in}

\newif\ifcomments
\commentstrue
\commentsfalse
\ifcomments
    \usepackage{ulem}
    \def\duccomment#1{{$\!$\color{red} [Thuan: #1]}}
    \def\ducedit#1{{$\!$\color{red} [Thuan: #1]}}

\else 
    \def\duccomment#1{}    
    \def\ducedit#1{}
    
\fi

\begin{document}

	\title
	{Bounded Guaranteed Algorithms for Concave Impurity Minimization Via Maximum Likelihood}

	\author{Thuan Nguyen, Hoang Le and Thinh Nguyen, \textit{Senior Member, IEEE}
\thanks{Thuan Nguyen  is with the Department of Computer Science, Tufts University, Medford, MA, 02155 USA, email: thuan.nguyen@tufts.edu;  Hoang Le and Thinh Nguyen are with the School of  Electrical Engineering and Computer Science, Oregon State University, Oregon, OR,
97331 USA, e-mail: \{lehoang,thinhq\}@oregonstate.edu.
}
\thanks{A part of this work was presented at IEEE International Conference on Acoustics, Speech and Signal Processing (ICASSP) 2021, Toronto, Ontario, Canada  \cite{nguyen2021constant}. In this version, we provided the full proofs for all theorems, introduced two additional algorithms based on the main algorithm in \cite{nguyen2021constant}, connected our results to other well-established results in signal processing and information theory, as well as added more numerical results. }}

\maketitle
\vspace{-0.2 in}
\begin{abstract}
Partitioning algorithms play a key role in many scientific and engineering disciplines.  A partitioning algorithm divides a set into a number of disjoint subsets or partitions. Often, the quality of the resulted partitions is measured by the amount of impurity in each partition, the smaller impurity the higher quality of the partitions.  In general, for a given impurity measure specified by a function of the partitions, finding the minimum impurity partitions is an NP-hard problem.  Let $M$ be the number of $N$-dimensional elements in a set and $K$ be the number of desired partitions, then an exhaustive search over all the possible partitions to find a minimum partition has the complexity of $O(K^M)$ which quickly becomes impractical for many applications with modest values of $K$ and $M$. Thus, many approximate algorithms with polynomial time complexity have been proposed, but few provide bounded guarantee. In this paper, an upper bound and a lower bound for a class of impurity functions are constructed. Based on these bounds, we propose a low-complexity partitioning algorithm with bounded guarantee based on the maximum likelihood principle. The theoretical analyses on the bounded guarantee of the algorithms are given for two well-known impurity functions Gini index and entropy.  When $K \geq N$, the proposed algorithm achieves state-of-the-art results in terms of lowest approximations and polynomial time complexity $O(NM)$. In addition, a heuristic greedy-merge algorithm having the time complexity of $O((N-K)N^2+NM)$ is proposed for $K<N$. Although the greedy-merge algorithm does not provide a bounded guarantee, its performance is comparable to that of the state-of-the-art methods. Our results also generalize some well-known information-theoretic bounds such as Fano's inequality and Boyd-Chiang's bound.
\end{abstract}
\begin{IEEEkeywords}
Partition, quantization, approximation, impurity, entropy, mutual information. 
\end{IEEEkeywords}

\section{Introduction}

Partitioning plays a key role in many scientific and engineering disciplines. It is a key building block in many popular algorithms such as clustering and classification in machine learning.  In signal processing and communications, partitioning algorithms, which are usually called  quantization, aim to minimize the distortion or maximize the mutual information between the original signal and the quantized signals.  A partitioning algorithm divides a set of $M$ $N$-dimensional elements into $K$ disjoint subsets or partitions. Often, the quality of the resulted partitions is measured by the amount of impurity in each partition, the smaller impurity the higher quality of the partitions.  Typically, the amount of impurity is measured  by a real-valued function over the resulted partitions.  

When the elements can be modeled as the outcomes of an underlying probabilistic model, it makes sense to consider some statistical measures such as the average or the variance of the impurity.  Naturally, a partitioning algorithm in this scenario might classify the elements into different partitions using probability distributions, rather than the values of the elements.  For example, let us consider a popular impurity function using the Shannon entropy \cite{quinlan2014c4}, \cite{coppersmith1999partitioning}, \cite{chou1991optimal}. Consider a set whose elements are outcomes of a random variable $W$.  A large entropy of a random variable implies that the elements are likely to be different, i.e., the set has a high level of non-homogeneity or "impurity".  A $K$-optimal partition algorithm divides the original set into $K$ subsets such that the weighted sum of entropies in each subset is minimum.  Since entropy is a concave function of the probability mass function andnot the values of a random variable, the partition algorithms, in this case, work directly with the $N$-dimensional probability mass vector.  In contrast, the popular $k$-means algorithms do not assume an underlying probabilistic model for how the elements come about.  Thus, the elements are clustered using a distance measure (typically Euclidean) which is a function of the actual values of the elements.  

In general, for a given impurity measure specified by a function over the partitions, finding the minimum impurity partitions is an NP-hard problem.  Since the number of possible partitions is $K^M$, an exhaustive search over all the possible partitions to find a minimum partition has the complexity of $O(K^M)$ which quickly becomes impractical for many applications with modest values of $K$ and $M$.   
To that end, many approximate algorithms with polynomial time complexity have been proposed, but few provide bounded guarantee  \cite{quinlan2014c4},  \cite{coppersmith1999partitioning}, \cite{burshtein1992minimum}, \cite{breiman2017classification}.  Many of these algorithms exploit the concavity of the impurity function to reduce the time complexity.  For example, in  \cite{breiman2017classification}, an algorithm is proposed to find the optimal partition using a concave impurity function with the computational complexity of $O(M \log M)$ for binary classification tasks ($K=2$). Burshtein \textit{et al.} \cite{burshtein1992minimum} and  Coppersmith \textit{et al.} \cite{coppersmith1999partitioning} provided algorithms and theoretical analysis for the partitioning problem for a general concave impurity function called "frequency-weighted impurity". These "frequency-weighted impurity" are concave functions over its second argument.  Two popular impurity functions the Gini index \cite{breiman2017classification} and Shannon entropy \cite{quinlan2014c4} belong to this class of frequency-weighted impurity.  Burshtein \textit{et al.} and Coppersmith \textit{et al.} showed that an optimal frequency-weighted impurity partition is separated by hyperplane cuts in the space of probability distributions.  Based on this insight, they also proposed polynomial time algorithms to determine the optimal partitions \cite{coppersmith1999partitioning}, \cite{burshtein1992minimum}, \cite{nguyen2020linear}. Based on the work of Burshtein \textit{et al.}, Kurkoski and Yagi proposed an algorithm to find the globally optimal partition that minimizes entropy impurity in $O(M^3)$ when $N=2$ \cite{kurkoski2014quantization}. 

Although many heuristic algorithms have been proposed, there are few results that provide  bounded performance guarantees.  Murtinho and Laber \cite{laber2019minimization} presented a (1 + $\epsilon$) approximation algorithm for Gini index in O($2^{K \log N}$) time complexity.  Laber \textit{et al.} \cite{laber2018binary}  constructed  a 2-approximation algorithm with computational complexity of $O(2^N  M \log M)$ for binary partition ($K=2$).  In other words, the impurity resulted from the proposed algorithm is at most a factor of 2 away from the smallest impurity.  The complexity can be  further reduced to $O(MN+ M \log M)$ at the expense of increasing the approximation factor from 2 to $3+\sqrt{3}$.
We also note that the algorithm in \cite{laber2018binary} is closely related to the well-established
Twoing method in \cite{breiman2017classification}.  Moreover, the application of the algorithm in \cite{laber2018binary} is limited to binary partitions ($K=2$).  Cicalese \textit{et al.} \cite{cicalese2019new} extended the work in \cite{laber2018binary} with a heuristic algorithm for $K > 2$. The algorithm can achieve $\log^2 (\min\{N,K\})$-approximation for the entropy impurity,  3-approximation for 
the Gini index impurity if $K<N$ and 2-approximation for 
the Gini index impurity if $K \geq N$.  It is the first constant factor algorithm for clustering based on minimizing entropy impurity that does not rely on assumptions about the input data.  Our analysis in Appendix \ref{apd: time complexity analysis} shows that the complexity of the algorithm in \cite{cicalese2019new} is reduced to polynomial time because the most time consuming step of the algorithm is based on the dynamic programming technique in \cite{kurkoski2014quantization} which reduces the time complexity to $O(M^3)$. Using the SMAWK algorithm \cite{aggarwal1987geometric}, we showed that the dynamic programming step of the algorithm in \cite{cicalese2019new} can be further reduced to $O(M \log M )$. The analysis of the algorithm in \cite{cicalese2019new} together with a suggested method for reducing the computational complexity is shown in Appendix \ref{apd: time complexity analysis}. 

The partitioning algorithm also tightly relates to clustering algorithms which group $P$
probability distributions into $K$ clusters in such a way to minimize a certain distance. For example, minimizing entropy impurity partition is equivalent to finding the optimal clusters that minimizes the Kullback-Leibler (KL) distance \cite{zhang2016low}.  Generally,  the local optimal solution minimizing impurity partition can be found based on the famous $k$-means algorithm with a suitable distance \cite{coppersmith1999partitioning}, \cite{nguyen2020linear}, \cite{zhang2016low}. 
Thus, the results about  approximation for clustering with KL-divergence in \cite{sra2008approximation}, \cite{chaudhuri2008finding} can be applied to find a good partition that minimizes 
the entropy impurity. For example, in \cite{sra2008approximation}, Sra et. al. showed that a $k$-means algorithm using the KL-divergence distance metric with an exponential time worst-case complexity (see \cite{vattani2011k}) can obtain $O(\log K)$-approximation of the optimal clustering.  The algorithm of Chaudhuri and McGregor \cite{chaudhuri2008finding} can provide $\log(M)$-approximation for finding a good clustering in polynomial time complexity. On the other hand, the quality of the approximation algorithm  in \cite{chaudhuri2008finding} depends on the size of the dataset.  In many settings where $M$ (number of data points) tends to be large while $N$ (data dimension) and $K$ (number of clusters) tend to be much smaller, thus the algorithm proposed by Cicalese \textit{et al.} \cite{cicalese2019new} is useful due to a smaller constant factor approximation of $\log^2 (\min\{N,K\})$ as compared to $\log(M)$ in \cite{chaudhuri2008finding}.

The contributions of this paper are fivefold: 
\begin{itemize}

\item   We construct an upper bound and a lower bound for a class of impurity functions including both the Gini index and entropy (see Theorem 1 and Theorem 4). To minimize the gap between these bounds, an approximation algorithm based on the maximum likelihood principle is introduced. The proposed algorithm (Algorithm \ref{alg: finding emax}) can provide comparable theoretical performance as well as numerical performances to the state-of-art methods in \cite{cicalese2019new}. Particularly, our theoretical bound is $2$-approximation\footnote{An $\alpha$-approximation algorithm is an algorithm that provides a solution no larger than $\alpha$ times the global optimal solution. For example, a $2$-approximation algorithm will output a partition that has the impurity at most two times the impurity introduced by the global optimal partition.} for the Gini index (see Theorem 7) and $\log^2 N$-approximation for entropy impurity under a certain condition (see Theorem 8). The time complexities of the proposed Algorithm \ref{alg: finding emax} is $O(NM)$ if $K \geq N$ and is $O(2^{N/2} NM)$ if $K < N$, respectively. 

\item  Based on the Algorithm \ref{alg: finding emax}, we propose a so-called greedy-splitting algorithm (Algorithm \ref{alg: greedy splitting}) to achieve a better splitting quality when $K>N$. Greedy-splitting algorithm still runs in $O(KNM)$ and achieves the bound at least equal or better compared to the bound of the original maximum likelihood algorithm e.g., $2$-approximation for Gini index and $\log^2 N$-approximation for entropy impurity. When $K<N$, the proposed Algorithm \ref{alg: finding emax} runs in $O(2^{N/2} NM)$ which is exponential in term of $N$. To reduce the time complexity, we proposed a so-called greedy-merge algorithm (Algorithm \ref{alg: greedy merge}) having the time complexity of $O((N-K)N^2 +NM)$ which is linear in the size of the dataset $M$. Although the greedy-merge algorithm does not provide a guarantee on splitting quality, it shows a comparable performance to the results provided by the proposed algorithm in \cite{cicalese2019new}. 

\item  In contrast to the state-of-the-art methods that can only handle a specific type of impurity function such as the Gini index or entropy \cite{laber2019minimization} \cite{cicalese2019new}, the proposed lower bound and upper bound are constructed using elementary techniques, and therefore, can be adapted to a wide class of concave impurity functions. 

\item We suggest a method that can improve the complexity of the algorithm in \cite{cicalese2019new} from $O(NM+M^3)$ down to $O(NM+M \log M )$ based on the matrix searching SMAWK algorithm \cite{aggarwal1987geometric}.

\item Our technique on bounding the solution confirms and generalizes well-established results in signal processing and information theory.  In particular, the Fano's inequality \cite{cover2012elements}  and Boyd-Chiang's upper bound \cite{chiang2004geometric} for channel capacity are consequences of our results.
\end{itemize}

Even though our algorithms and approximation factors share some similarities with the results in \cite{cicalese2019new} (see Remark 3, 5, 6, and 7), the main difference comes from our elementary proofs that are mainly constructed based on the lower bound and the upper bound for a wide class of impurity functions. These elementary techniques enable our approach to be easily adapt to any concave impurity functions.

From signal processing, communication, and information theory's perspectives, our work is related to optimal quantization design for constructing polar code \cite{tal2013construct} and low density parity code (LDPC) decoder \cite{romero2015decoding}.  These optimal quantizers aim to maximize the mutual information between input and output \cite{kurkoski2014quantization}, \cite{zhang2016low}, \cite{nguyen2021minimizing, nguyen2020communication, mathar2013threshold, nguyen2018capacities, kurkoski2017single, vangala2015quantization}. Kurkoski \textit{et al.} \cite{kurkoski2014quantization} showed that for a given input distribution, finding an optimal quantizer that maximizes the mutual information is equivalent to finding an optimal partition that minimizes the entropy impurity.  Thus, our algorithm can be applied to find a good quantizer that maximizes mutual information. 
In addition, the problem of minimizing impurity partitions also relates to the well-known Information Bottleneck Method (IBM) \cite{tishby2000information} and Deterministic Information Bottleneck (DIB) \cite{strouse2017deterministic}. Particularly, both IBM and DIB can be viewed as the problems of minimizing entropy impurity partition under constraints for a given input distribution.  Thus, the results in this paper can be used to design good approximation algorithms for IBM and DIB. 

The outline of our paper is as follows.  In Section \ref{sec: problem} describes the problem formulation.  In Section \ref{sec: solution}, an upper bound of impurity partition is constructed together with an algorithm that provably achieves near-optimal impurity. The proof of near-optimal partition together with a lower bound of impurity function is characterized in Section \ref{sec: proof of near optimal}. To reduce the time complexity of the proposed algorithm in Section \ref{sec: solution}, we propose two greedy algorithms having linear time complexity in Section \ref{sec: greedy algs}. The numerical results are provided in Section \ref{sec: simulation}.  Finally, we provide a few concluding remarks in Section \ref{sec: conclusion}.

\section{Problem Formulation}
\label{sec: problem}
\begin{figure}
  \centering
  \includegraphics[width=1.5 in, height = 1.2 in]{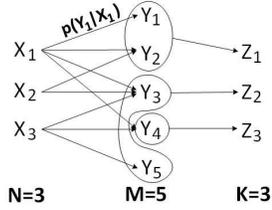}\\
  \caption{Finding an optimal quantizer $Q^*(Y) \rightarrow Z$ such that $I_{Q^*}$  is minimized.}\label{fig: 1}
 \end{figure} 
\subsection{Problem formulation}
We assume that the set $\mathbb{Y}$ to be partitioned consists of $M$ discrete data points generated from an underlying probabilistic model.  Specifically, let $X$ be a discrete random variable taking on the values $x_1, x_2, \dots, x_N$ with a given probability mass vector $\mathbf{p}_{\mathbf{x}} = (p(x_1), p(x_2), \dots, p(x_N))$.  Let $Y$ be another discrete random variable taking on values $y_1, y_2, \dots, y_M$ which follows a given conditional probability $p(y_j|x_i)$.   The goal is to partition $\mathbb{Y}$ into $K$ partitions to minimize a given impurity function over the resulted partitions.  
Fig. \ref{fig: 1} shows a generative model for $Y$. $Y$ is then partitioned/quantized to $Z$ using a partition scheme/quantizer $Q$.

$$Q(Y) \rightarrow Z.$$

$Z$ is modeled as a discrete random variable $Z$ taking on values $z_1, z_2, \dots, z_K$.  In this setting, for given $p(x_i)$ and $p(y_j|x_i)$,  $p(x_i, y_j)$ are assumed to be given $\forall i,j$. Thus, each data point $y_j$ is represented by a joint distribution vector $\mathbf{p}_{\mathbf{x},y_j}=(p(x_1,y_j), p(x_2,y_j), \dots, p(x_N,y_j))$.
Each quantizer $Q$ induces  a joint distribution vector  $\mathbf{p}_{\mathbf{x},z_k}=(p(x_1, z_k),p(x_2, z_k),\dots, p(x_N, z_k))$ between $X$ and $Z=z_k$. The conditional distribution $p(x_i|z_k)$ of $X$ given $Z$ and  the marginal probability mass function $p(z_k)$ of $Z$ can be determined from $p(x_i, z_k)$.  We want to find an optimal quantizer $Q^*$ that minimizes the impurity function $I_Q$ that satisfies two following conditions:

\begin{itemize}
\item (\textbf{Required}) $I_Q$ has the following form:
\begin{equation}
\label{eq:impurity}
I_Q=\sum_{k=1}^{K}\sum_{i=1}^{N}{p(z_k)f(p(x_i|z_k))},
\end{equation}		
where $f(.): \mathbb{R} \rightarrow \mathbb{R}^+$ is a non-negative concave function.  $f(x)$ is concave over a continuous interval $\mathbb{S}$ if for any $a, b \in \mathbb{S}$,    \begin{equation}
\label{eq: concave function}
f(\lambda a + (1-\lambda)b) \geq \lambda f(a) + (1-\lambda)f(b), \forall \lambda \in (0,1).
\end{equation}

We note that $\sum_{i=1}^{N}f(p(x_i|z_k))$ in (\ref{eq:impurity}) is the impurity contribution from partition $z_k$.  Therefore,  $I_Q$ is viewed as the weighted average impurity over all the partitions.  Many well-known impurity functions such as Gini index \cite{quinlan2014c4} and entropy \cite{breiman2017classification} have concave $f(.)$.

\item (\textbf{Optional}) $f(x) = xl(x)$ where $l(x):\mathbb{R} \rightarrow \mathbb{R}$ is a convex function.  $l(x)$ is convex over a continuous interval $\mathbb{S}$ if for any $a, b \in \mathbb{S}$,    \begin{equation}
\label{eq: convex function of l(x)}
l(\lambda a + (1-\lambda)b) \leq \lambda l(a) + (1-\lambda)l(b), \forall \lambda \in (0,1).
\end{equation} 

This second condition is optional in the sense that we only use it in the analysis of the approximation of the proposed algorithm.  The algorithm itself does not make use of this condition.  Furthermore, many popular impurity functions indeed satisfy this second condition.
\end{itemize}

{\bf Examples of popular impurity functions:}

\begin{itemize}
\item{} Entropy function:
Let $f(x) = -x \log{x}$ which can be shown to be a concave function.  Replacing $f(x) = -x \log{x}$ with $x = p(x_i|z_k)$ into (\ref{eq:impurity}), we have:
\begin{equation}
I_Q = \sum_{k=1}^{K} \sum_{i=1}^{N}p(z_k)[-p(x_i|z_k)\log{(p(x_i|z_k))}],
\end{equation}
which is the weighted conditional entropy of $X$ given $Z$.
Also let $l(x) = -\log{x}$,  $l(x)$ is a convex function. Thus $f(x) = -x\log{x}$ satisfies  the second optional condition.

\item{} Gini index function: Given a set $A$ of elements with random $N$ labels according to the distribution of the labels $\mathbf{p}_\mathbf{A} = (p(A_1), p(A_2), \dots, p(A_N))$.  The Gini impurity is a measure of how often a randomly chosen element based on the label distribution would be mislabeled.  Specifically,  since the probability of picking an element with the label $i$ is $p(A_i)$ then the probability of mislabeling that element is $\sum_{l \neq i}{p(A_i)} = 1 - p(A_i)$.  Summing all $i$, the probability of mislabeling an element is: $$\sum_{i=1}^{N}p(A_i)(1-p(A_i)).$$ 
Let $f(x)=x(1-x)$ which can be shown to be a concave function. Replacing $f(x)=x(1-x)$ using $x=p(x_i|z_k)$ into (\ref{eq:impurity}), 
the Gini index impurity \cite{coppersmith1999partitioning} has the following form:
\begin{eqnarray}
I_Q =\sum_{k=1}^{K} \sum_{i=1}^{N} p(z_k) [p(x_i|z_k) (1-p(x_i|z_k))].
\end{eqnarray}
Additionally, let $l(x)=1-x$, $l(x)$ is a linear function, therefore, $l(x)$ is a convex function. Thus the Gini index impurity satisfies  the second optional condition.
\end{itemize}

We note that in \cite{cicalese2019new} and \cite{laber2018binary}, to guarantee their  approximations, the authors considered a class of impurity concave functions $f(.)$ with an additional condition  on $xf''(x)$ being a non-increasing function.

\section{Impurity Minimization Algorithm}
\label{sec: solution}

In this section, we first construct both upper and lower bounds for impurity functions of the form in (\ref{eq:impurity}).  Using these bounds, we show that the proposed maximum likelihood algorithm can achieve a good approximation. In other words, the resulted solution is guaranteed to be away from the true solution by at most a factor that does not depend on the number of data points $M$.  

We define three important quantities below:
\begin{equation}
\label{eq:i^*}
k^* =\argmax_{ 1 \leq i \leq N}{ p(x_i|z_k)},
\end{equation}
\begin{equation}
e_Q=\sum_{k=1}^{K}{p(z_k)p(x_{k^*}|z_k)},
\end{equation}
and 
\begin{equation}
e^{\max}=\max_Q e_Q.
\end{equation}
For a given $k$, $z_k$ is most likely be produced by $x_{k^*}$. 
Therefore, $e_Q$ is the weighted sum of the maximum likelihood of each $x_{k^*}$ for each $z_k$.  
We note that each partition scheme/quantizer $Q$ induces a $p(x_i, z_k)$ and thus $p(x_i|z_k)$.  So $k^*$  and $e_Q$ are different for different $Q$. Our approach to find the minimum impurity is to find two functions: $u(e_Q)$ and $l(e_Q)$ such that  $l(e_Q) \leq I_Q \leq u(e_Q)$.  Furthermore, we show that $u(e_Q)$ and $l(e_Q)$ are decreasing functions for many impurities.  Therefore, by minimizing $u(e_Q)$, i.e., maximizing $e_Q$, we can bound the minimum value of $I_Q$ between $u(e_Q)$ and $l(e_Q)$ for some $e_Q$.   

\subsection{Upper Bound  of The Impurity Function}
We have the following theorem for the upper bound of an impurity function $I_Q$. 
\begin{theorem} ({\bf Upper bound})
\label{theorem: 1}
For any given quantizer $Q$ that induces the corresponding $p(x_i|z_k)$ and: 
\begin{equation}
e_Q=\sum_{k=1}^{K}p(z_k)p(x_{k^*}|z_k),
\end{equation}
let:
\begin{equation}
    u(e_Q)=f(e_Q)+ (N-1)f(\dfrac{1-e_Q}{N-1}),
\end{equation}
then $\forall e_Q$, we have:
\begin{equation}
u(e_Q) \ge I_Q.
\end{equation}
\end{theorem}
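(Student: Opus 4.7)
The plan is to apply Jensen's inequality twice, exploiting the concavity of $f$ both at the level of a single partition and across partitions.

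First, I would fix a partition index $k$ and work inside one partition. The probabilities $\{p(x_i|z_k)\}_{i=1}^N$ sum to one, and by definition $p(x_{k^*}|z_k)$ is the largest of them. Split the inner sum $\sum_{i=1}^N f(p(x_i|z_k))$ into the single term at $i=k^*$ plus the remaining $N-1$ terms. Since $\sum_{i\ne k^*} p(x_i|z_k) = 1 - p(x_{k^*}|z_k)$, concavity of $f$ together with Jensen's inequality (equivalently, the fact that the average of a concave function is bounded by the function of the average) gives
\begin{equation*}
\sum_{i\ne k^*} f(p(x_i|z_k)) \le (N-1)\, f\!\left(\frac{1-p(x_{k^*}|z_k)}{N-1}\right).
\end{equation*}
Adding $f(p(x_{k^*}|z_k))$ to both sides yields, for every $k$, the per-partition bound
\begin{equation*}
\sum_{i=1}^{N} f(p(x_i|z_k)) \le f(p(x_{k^*}|z_k)) + (N-1)\, f\!\left(\frac{1-p(x_{k^*}|z_k)}{N-1}\right).
\end{equation*}

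Next I would define the auxiliary one-variable function
\begin{equation*}
g(t) \;=\; f(t) + (N-1)\, f\!\left(\frac{1-t}{N-1}\right),
\end{equation*}
so that the previous inequality reads $\sum_i f(p(x_i|z_k)) \le g(p(x_{k^*}|z_k))$. Multiplying by $p(z_k)$ and summing over $k$ gives $I_Q \le \sum_k p(z_k)\, g(p(x_{k^*}|z_k))$. Now I would verify that $g$ is concave on $[0,1]$: the first term $f(t)$ is concave by assumption, and $f\!\left(\tfrac{1-t}{N-1}\right)$ is the composition of the concave $f$ with an affine map, hence concave; a non-negative linear combination of concave functions is concave, so $g$ is concave.

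Finally, applying Jensen's inequality once more to the probability distribution $\{p(z_k)\}_{k=1}^K$ and the concave function $g$,
\begin{equation*}
\sum_{k=1}^{K} p(z_k)\, g(p(x_{k^*}|z_k)) \;\le\; g\!\left(\sum_{k=1}^{K} p(z_k)\, p(x_{k^*}|z_k)\right) \;=\; g(e_Q),
\end{equation*}
which is exactly $u(e_Q)$. Chaining the two bounds gives $I_Q \le u(e_Q)$, as required. I do not expect any real obstacle here: both applications of Jensen are standard, and the only step that needs a moment of care is confirming that the outer function $g$ inherits concavity from $f$, which follows immediately from the affine-composition rule.
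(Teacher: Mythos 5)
Your proof is correct and follows essentially the same route as the paper's: split off the $i=k^*$ term, apply Jensen to the remaining $N-1$ conditional probabilities within each partition, and then apply Jensen across partitions with weights $p(z_k)$. Your packaging of the final step as a single application of Jensen to the concave function $g(t)=f(t)+(N-1)f\bigl(\tfrac{1-t}{N-1}\bigr)$ is equivalent to the paper's two separate applications of concavity to each summand, so there is no substantive difference.
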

\begin{proof}
From the definition of the impurity function, we have:
\begin{small}
\begin{eqnarray}
I_Q \! &\!=\!& \!\sum_{k=1}^{K}\sum_{i=1}^{N} p(z_k)f(p(x_i|z_k)) \nonumber \\
&\!=\!& \! \sum_{k=1}^{K}p(z_k)f(p(x_{k^*}|z_k)) + \sum_{k=1}^{K}\sum_{i\neq k^*, i=1}^{N} p(z_k)f(p(x_i|z_k)) \nonumber \\
&\!\leq\!& \! f(\sum_{k=1}^{K} \! p(z_k)p(x_{k^*}|z_k))\!+\! \sum_{k=1}^{K} \! \sum_{i\neq k^*, i\!=\!1}^{N} \! p(\!z_k\!) \! f(p(x_i|z_k)) \label{eq: 14}\\
&\!\leq\!& f(\sum_{k=1}^{K} p(z_k)p(x_{k^*}|z_k)) \nonumber\\
&& \!+\! \sum_{k=1}^{K}p(z_k) [(N\!-\!1) f(\dfrac{\sum_{i\!=\!1,i \! \neq \! k^*}p(x_i|z_k)}{N-1})]\label{eq: 15}\\
&=&f(e_Q) + (N-1)\sum_{k=1}^{K}p(z_k)  f(\dfrac{1-p(x_{k^*}|z_k)}{N-1}) \label{eq: 16}\\
&\!\leq\!& \! f(e_Q)+(N-1) f(\dfrac{\sum_{k=1}^{K}p(z_k)(1-p(x_{k^*}|z_k))}{N-1}) \label{eq: 17}\\
&\!=\!& f(e_Q)+ (N-1) f(\dfrac{1-e_Q}{N-1}), \label{eq: 18}
\end{eqnarray}
\end{small}where (\ref{eq: 14}) is due to concavity of $f(.)$ and $\sum_{k=1}^{K}p(z_k)=1$, (\ref{eq: 15}) is due to Jensen inequality for concave function (please see the Appendix \ref{apd: jensen inequality}), (\ref{eq: 16}) is due to the definition of $e_Q$ and $\sum_{i=1,i \neq k^*}^{N}p(x_i|z_k)+ p(x_{k^*}|z_k)=1$, (\ref{eq: 17}) is due to concavity of $f(.)$ and $\sum_{k=1}^{K}p(z_k)=1$, (\ref{eq: 18}) is due to $\sum_{k=1}^{K}p(z_k)=1$. 
\end{proof}
\begin{remark}({\bf Fano's inequality.})
There is an interesting connection between $u(e_Q)$ and the well-known Fano's inequality from the information theory.
Specifically, if $f(x)$ is the entropy function, then the upper bound in Theorem \ref{theorem: 1} is identical to the  Fano's inequality. Please see the details of the derivations in the Appendix \ref{apd: fano}.
\end{remark}

\begin{remark}({\bf Maximum likelihood decoding.})
Consider a communication setting with $X$ and $Z$ being the two random variables that represent the transmitted symbols and the received symbols, respectively.  The goal for a receiver is to recover $X$ based on $Z$.   A maximum likelihood decoder  maximizes the posterior probability of $X$ given $Z$.  Specifically, if a symbol $z_k$ is received, then the transmitted symbol is $x_{k^*}$ where $k^*=\argmax_{1\leq i \leq N} p(x_i|z_k)$.  Consequently, $e_{Q}= \sum_{k=1}^{K}{p(z_k)p(x_{k^*}|z_k)}$
is the probability of decoding a transmitted symbol correctly using the mapping $Q$,
and $P_e =  1-e_{Q}$
is the probability of decoding a transmitted symbol incorrectly.
\end{remark}

\begin{theorem}
\label{theorem: monotonic decreasing of upper bound}
$u(e_Q)$ is a monotonic decreasing function.
Moreover, $I_Q = u(e_Q)$ when  $e_Q=\dfrac{1}{N}$ or $e_Q=1$.
\end{theorem}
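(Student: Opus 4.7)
The plan is to establish monotonicity from the concavity of $f$ alone, and then check the two boundary equalities by observing that the extreme values of $e_Q$ rigidly determine the conditional distributions $p(\cdot\,|\,z_k)$.

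First, I would pin down the feasible range of $e_Q$. For every $k$, the quantity $p(x_{k^*}|z_k)=\max_i p(x_i|z_k)$ is the maximum of $N$ nonnegative numbers summing to $1$, so it lies in $[1/N,1]$. Taking a convex combination with weights $p(z_k)$ preserves this range, giving $e_Q\in[1/N,1]$. Hence $u$ only needs to be analyzed on this interval.

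Second, for monotonicity I would differentiate $u(e)=f(e)+(N-1)f\!\left(\frac{1-e}{N-1}\right)$ to obtain $u'(e)=f'(e)-f'\!\left(\frac{1-e}{N-1}\right)$. Since $f$ is concave, $f'$ is non-increasing; and the condition $e\ge 1/N$ is exactly equivalent to $e\ge(1-e)/(N-1)$. Combining these gives $f'(e)\le f'\!\left(\frac{1-e}{N-1}\right)$, hence $u'(e)\le 0$ on $[1/N,1]$. If $f$ fails to be differentiable at some point, I would replace $f'$ with the left/right derivatives (which exist and are monotone on any open interval of concavity) or use the purely algebraic consequence of concavity, $f(a+h)-f(a)\le f(b+h)-f(b)$ whenever $a\ge b$ and $h\ge 0$, applied with the appropriate choice of $a$, $b$, $h$ to compare $u(e_2)$ and $u(e_1)$ directly.

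Third, the two equality cases follow from a uniqueness argument. When $e_Q=1/N$, the bound $p(x_{k^*}|z_k)\ge 1/N$ together with $\sum_k p(z_k)p(x_{k^*}|z_k)=1/N$ forces $p(x_{k^*}|z_k)=1/N$ for every $k$ with $p(z_k)>0$; since $x_{k^*}$ is the maximizer, all $p(x_i|z_k)$ must then equal $1/N$. This yields $I_Q=Nf(1/N)$, which coincides with $u(1/N)=f(1/N)+(N-1)f(1/N)$. Symmetrically, when $e_Q=1$, the bound $p(x_{k^*}|z_k)\le 1$ forces $p(x_{k^*}|z_k)=1$ (and all other conditional probabilities to vanish), giving $I_Q=f(1)+(N-1)f(0)=u(1)$.

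The main obstacle I anticipate is the mild technicality around possibly non-differentiable concave $f$, which is handled by the one-sided-derivative or algebraic-concavity substitute described above. Everything else is a short computation plus the observation that the extremes of $e_Q$ saturate the two concavity inequalities used in the proof of Theorem~\ref{theorem: 1} (Jensen's inequality and direct concavity), which is precisely why equality in $u(e_Q)\ge I_Q$ must hold there.
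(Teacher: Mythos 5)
Your proposal is correct and follows essentially the same route as the paper's proof in Appendix E: differentiate $u$, use $e_Q\ge 1/N\ \Leftrightarrow\ e_Q\ge (1-e_Q)/(N-1)$ together with the monotonicity of $f'$ from concavity to get $u'\le 0$, and verify the two boundary equalities by noting that $e_Q=1/N$ forces all conditionals to be uniform while $e_Q=1$ forces them to be degenerate. Your extra care about non-differentiable $f$ (one-sided derivatives or the algebraic concavity inequality) is a harmless refinement the paper omits.
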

\begin{proof}
Please see the Appendix \ref{apd: proof of theorem monotonic upper bound}. 
\end{proof}
Based on Theorem \ref{theorem: monotonic decreasing of upper bound}, let $e^{\max}$ be the maximum value over all $e_Q$ i.e., $e^{\max} = \max_Q{e_Q}$, then $u(e^{\max})$ has the minimum value. Since $u(e_Q)$ is an upper bound of $I_Q$, $u(e^{\max})$ provides a good upper bound for $I_{Q^*}$.
We now state an important result for a special case where the sample space of $Z$ is identical to that of $X$.  In other words, $K=N$ and $z_k = x_i$ for some $k$ and $i$. 

\begin{theorem}({\bf Structure of the $\mathbf{e^{\max}}$ quantizer})
\label{theorem: p(x,y)}
Let $\mathcal{Z}$ and $\mathcal{X}$ be the sample spaces of $Z$ and $X$, respectively. 
Let $j^* = \argmax_i p(x_i, y_j)$  and define quantizer $Q_{e^{\max}}$  with the following structure:
\begin{equation}
\label{eq: optimal quantizer structure}
Q_{e^{\max}}(y_j) = z_{j^*}.
\end{equation}

(a) If $|\mathcal{Z}| = |\mathcal{X}|$, then $Q_{e^{\max}}$ produces $e^{\max} = \max_Q{e_Q}$.  Conversely, for any $Q$ that produces $e^{\max}$, $Q$ must have the structure of $Q_{e^{\max}}$. 

(b) If $|\mathcal{Z}| > |\mathcal{X}|$, then $Q_{e^{\max}}$ still produces $e^{\max} = \max_Q{e_Q}$. However, it is not necessary that for any $Q$ that produces $e^{\max}$, $Q$ must have the structure of $Q_{e^{\max}}$.
\end{theorem}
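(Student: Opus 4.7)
The plan is to express $e_Q$ in a form that makes the dependence on the quantizer transparent, and then exploit the elementary inequality $\max_i \sum_j a_{ij} \leq \sum_j \max_i a_{ij}$ to derive a quantizer-independent upper bound on $e_Q$. Using $p(z_k)\,p(x_i\mid z_k) = p(x_i, z_k) = \sum_{j : Q(y_j) = z_k} p(x_i, y_j)$, I would first rewrite
\begin{equation*}
e_Q \;=\; \sum_{k=1}^{K} \max_i \sum_{j : Q(y_j) = z_k} p(x_i, y_j),
\end{equation*}
and then pull the max inside each partition to obtain
\begin{equation*}
e_Q \;\leq\; \sum_{k=1}^{K} \sum_{j : Q(y_j) = z_k} \max_i p(x_i, y_j) \;=\; \sum_{j=1}^{M} p(x_{j^*}, y_j) \;=:\; S.
\end{equation*}
Crucially, $S$ depends only on the joint distribution $p(x_i, y_j)$ and not on $Q$, so $e^{\max} \leq S$.

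Next, I would verify that $Q_{e^{\max}}$ attains $S$. Under $Q_{e^{\max}}$ the preimage of $z_k$ is exactly $\{y_j : j^* = k\}$, and on that set $x_k$ is, by the definition of $j^*$, an argmax of $p(\cdot, y_j)$. Choosing $i = k$ in the outer max therefore already recovers $\sum_{j : j^* = k} p(x_{j^*}, y_j)$ in each partition, so summing over $k$ yields $e_{Q_{e^{\max}}} = S$. This proves $e^{\max} = S$ and that $Q_{e^{\max}}$ achieves it, establishing the existence halves of both (a) and (b); observe that this step nowhere used the assumption $|\mathcal{Z}| = |\mathcal{X}|$.

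For the converse in (a), I would analyze the equality condition of $\max_i \sum_j a_{ij} = \sum_j \max_i a_{ij}$: on each partition $z_k$, equality forces the existence of a common index $i_k$ that realizes the maximum of $p(\cdot, y_j)$ for every $y_j$ in that partition. Under the generic hypothesis that $\argmax_i p(x_i, y_j)$ is unique for each $j$, this means every $y_j$ assigned to $z_k$ shares the same $j^* = i_k$, so the partition of $\mathbb{Y}$ induced by $Q$ refines the partition by $j^*$. Combined with $|\mathcal{Z}| = |\mathcal{X}| = N$, a strict subdivision of any $j^*$-class would force more than $N$ nonempty blocks, which is impossible; hence $Q$ coincides with $Q_{e^{\max}}$ up to relabeling of $\mathcal{Z}$. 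For (b), the same equality analysis yields the failure of the converse: when $|\mathcal{Z}| > |\mathcal{X}|$ one may split any $j^*$-class of size at least two into two $z$-partitions (each still carrying a common dominant index) and preserve tightness of the bound, thereby producing optima distinct from $Q_{e^{\max}}$.

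The main obstacle is the handling of tie-breaking in the defining $\argmax$ and the case in which some $x_i$ is dominant for no $y_j$: if $\argmax_i p(x_i, y_j)$ is not a singleton, or if some $j^*$-class is empty, one must be precise about what \emph{structure of $Q_{e^{\max}}$} means. I would formalize the converse in (a) as coincidence of the induced partitions modulo (i) an arbitrary selection among tied argmax values and (ii) the labels of empty partitions, noting that when every $x_i$ is dominant for some $y_j$ and ties are absent the labeling is essentially forced.
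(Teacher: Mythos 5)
Your proof is correct and, for the forward (existence) direction, is essentially the paper's argument repackaged: the paper's chain $e_{Q_{e^{\max}}}=\sum_{k}\max_i\sum_{j:Q(y_j)=z_k}p(x_i,y_j)\ge e_Q$ is exactly your bound $e_Q\le S:=\sum_j\max_i p(x_i,y_j)$ together with the observation that $Q_{e^{\max}}$ attains $S$ because each of its cells carries a common dominant index; your version is cleaner because it isolates $S$ as a quantizer-independent quantity and makes it transparent that this half never uses $|\mathcal{Z}|=|\mathcal{X}|$. Where you genuinely diverge is the converse of (a) and the failure of the converse in (b). The paper proves the converse by an exchange argument: assume some $y_n$ is assigned to $z_l$ with $l\neq\argmax_i p(x_i,y_n)$, move it to $z_m$, and compute that $e_Q$ strictly increases; this argument is a bit delicate because after the move the per-cell argmax indices $l^*,m^*$ can change (the paper writes the comparison as an equality with the old indices, when it is really a one-sided bound), and it implicitly needs strictness of the argmax. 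You instead analyze the equality case of $\max_i\sum_j a_{ij}\le\sum_j\max_i a_{ij}$, which forces a common maximizing index on every cell of any optimal $Q$ and hence shows the optimal partition refines the $j^*$-partition; with $K=N$ and all classes nonempty this pins down $Q_{e^{\max}}$ up to relabeling. This buys you a cleaner derivation and, as a free byproduct, the counterexamples for (b) (split a $j^*$-class when spare labels exist), whereas the paper establishes (b) by the dual operation of merging cells that share a dominant index. Your explicit caveats --- ties in $\argmax_i p(x_i,y_j)$, and $j^*$-classes that are empty because some $x_i$ dominates no $y_j$ --- are well taken: in the empty-class case the converse of (a) as literally stated can fail even when $K=N$, a degeneracy the paper's proof silently excludes by assuming the strict inequality $p(x_m,y_n)>p(x_l,y_n)$, so stating the converse modulo relabeling and tie-breaking, as you do, is the right fix rather than a gap.
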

\begin{proof}
Please see the Appendix \ref{apd: proof of theorem {theorem: p(x,y)}}.
\end{proof}

We note that $j^*$ takes on values $1, 2, \dots, N$, and $z_{j^*}$'s represent the $K=N$ partitions. In other words, when $K > N$ then existing an optimal quantizer $Q_{e^{\max}}$ that produces exactly $N$-partition rather than $K$-partition. Interestingly, for $K>N$, the mapping using only $N$-partition in Theorem \ref{theorem: p(x,y)}-(b) is still optimal i.e., it produces the partitions achieving $e^{\max}$. However, Theorem \ref{theorem: p(x,y)}-(b) does not guarantee any $Q$ that produces $e^{\max}$ must have the structure of $Q_{e^{\max}}$. Indeed, there might exist other quantizers that achieve $e^{\max}$. On the other hand, Theorem \ref{theorem: p(x,y)}-(a)  states that if $K=N$, then any quantizer producing $e^{\max}$ must have the structure of $Q_{e^{\max}}$ in (\ref{eq: optimal quantizer structure}). This necessary condition helps to find $Q_{e^{\max}}$ when $K<N$ as to be shown later. The detail of proof is in Appendix \ref{apd: proof of theorem {theorem: p(x,y)}}. 

\subsection{Algorithm}
Based on the upper bound in Theorem \ref{theorem: 1}, to minimize the impurity function, one wants to minimize the impurity's upper bound $u(e_Q)$. Based on Theorem \ref{theorem: monotonic decreasing of upper bound}, to minimize $u(e_Q)$, one wants to maximize $e_Q$.  To maximize $e_Q$, we propose the algorithm below which utilizes the result of Theorem \ref{theorem: p(x,y)}.

Let $\mathcal{V}_K$ be the set of binary $N$-dimensional vectors $\mathbf{v}$'s, each contains exactly $K$ entries 1 and  $N-K$ entries 0. Thus, the size of $\mathcal{V}_K$ is ${N \choose K}$.  For each $\textbf{v} = (v_1, v_2, \dots, v_N)$, define the $N$-dimensional vector:
\begin{align*}
    \mathbf{p}'_{\mathbf{x},y_j} &=  \big[ v_1p(x_1,y_j),v_2 p(x_2,y_j), \dots, v_Np(x_N,y_j) \big] \\
    &=\big[p'(x_1,y_j), p'(x_2,y_j), \dots, p'(x_N,y_j) \big]
\end{align*} 
then $\mathbf{p}'_{\mathbf{x},y_j}$ has exactly $K$ non-zero entries. Next, we consider the following  possible cases. 
\begin{itemize}

\item \textbf{$K = N$}:    When $K=N$, $\mathcal{V}_K=\mathcal{V}_N$ contains exactly one $\mathbf{v}$ which is $\mathbf{v}=(1,1,\dots,1)$.  In this case, $p'(x_i,y_j)$ = $p(x_i,y_j)$. Thus, using Theorem \ref{theorem: p(x,y)}-(a) with $p(x_i,y_j)$ replaced by $p'(x_i,y_j)$ will produce $e^{\max}$.  

\item \textbf{$K < N$}: When $K < N$, there are ${N \choose K}$ quantizers $Q$ that partition $K$-dimension vectors $\mathbf{p}'_{\mathbf{x},y_j}$ to $K$ partitions. Moreover, from the necessary condition in Theorem \ref{theorem: p(x,y)}-(a), at least one of quantizer in this ${N \choose K}$ quantizers must achieve $e^{\max}$.

\item \textbf{$K>N$}: From  Theorem \ref{theorem: p(x,y)}-(b),  the partition which achieves $e^{\max}$ is exactly the same with the partition when $K=N$. In other words, the partition can be achieved using the maximum likelihood principle using  $\mathbf{v}=(1,1,\dots,1)$, and the optimal partitions which produces $e^{\max}$ has $N$ nonempty partitions together with $K-N$ empty partitions.
\end{itemize}

Based on three possible cases above, the maximum  likelihood algorithm (Algorithm \ref{alg: finding emax}) follows. The detail of the proof is shown in Appendix \ref{apd: proof of theorem {theorem: p(x,y)}}. 

\begin{footnotesize}
\begin{algorithm}
\caption{ Finding $e^{\max}$ Algorithm.}
\label{alg: finding emax}
\begin{algorithmic}[1]
\State{\textbf{Input}: Dataset $Y=\{y_1,\dots,y_M\}$ and  $p(x_i,y_j)$, $K$ and $N$}.

\State{\textbf{Output}: Partition $Z=\{z_1,z_2,\dots,z_K\}$}.  

\State{\textbf{If $K<N$: $\mathcal{V} = \mathcal{V}_K$}}

\State{\textbf{If $K \geq N$: $\mathcal{V} = \mathcal{V}_N$}}

\State{\textbf{\hspace{0.2in} For $\textbf{v} \in \mathcal{V}$}}

\State{\hspace{0.4in} \textbf{For $1 \leq j \leq M$, $1 \leq i \leq N$}}

\State{\hspace{0.6 in} \textbf{Step 1}: Projection.}
\begin{equation}
 p'(x_i,y_j)   = v_ip(x_i,y_j).
\end{equation}

\State{\hspace{0.6 in} \textbf{Step 2}: Finding the maximum likelihood.}
\begin{equation}
\label{eq: maximum likelihood-2}
j^*=\argmax_{1\leq i \leq N} \{ p'(x_i,y_j) \}.
\end{equation}
\State{\hspace{0.6 in} \textbf{Step 3}: Partition assignment.}
\begin{equation}
\label{eq: partition assignment-2}
Q(y_j) \rightarrow z_{j^*}.
\end{equation}

\State{\hspace{0.4 in}\textbf{End For }}

\State{\hspace{0.4 in}\textbf{Computing $e_Q$:} Using the resulted partitions to compute $e_Q$.}

\State{\textbf{ \hspace{0.2 in} End For }}

\State{\textbf{Return:} Returning the partition that produces $e^{\max}=\max_Q e_Q$.}
\end{algorithmic}
\end{algorithm}
\end{footnotesize}

\textbf{Time complexity of Algorithm \ref{alg: finding emax}:} To find the partition that generates $e^{\max}$, we need to search over all the possible mappings  $\textbf{v} \in \mathcal{V}_K$. For each $\mathbf{v}$,  Algorithm \ref{alg: finding emax} has complexity of $O(NM)$.  Since there are ${N \choose K}$ possible $\mathbf{v}$ if $K<N$, Algorithm \ref{alg: finding emax} has the complexity of
$O({N\choose K} NM)$. In the worst case when $K = N/2$, we have ${N \choose N/2} = 2^{N/2}$ and the complexity of Algorithm \ref{alg: finding emax} is $O(2^{N/2} NM)$. However, if $K \geq N$,  there is only one mapping $\mathbf{v}$ and the time complexity of algorithm is truly in linear of $O(NM)$.

\begin{remark}
The idea of Algorithm \ref{alg: finding emax} is mainly based on the maximum likelihood principle and also shares some similarities with the Dominance algorithm in \cite{cicalese2019new}. Particularly, the proposed Algorithm \ref{alg: finding emax} is similar to the Dominance algorithm when $K \geq N$, however, these two algorithms are different if $K < N$. We refer the readers to Sec. 4 of \cite{cicalese2019new} for more details how Cicalese \textit{et al.} handled their partitions when $K < N$.  
\end{remark}


\section{Approximation Analysis for Entropy and Gini Index}
\label{sec: proof of near optimal}

In this section, we state a few results for establishing the approximation property of Algorithm \ref{alg: finding emax}. First, the following theorem establishes a lower bound for $I_Q$.  

\begin{theorem}({\bf Lower bound})
\label{theorem: 2}
For any given quantizer $Q$ that induces the corresponding $p(x_i|z_k)$ and:
\begin{equation}
e_Q=\sum_{k=1}^{K}p(z_k)p(x_{k^*}|z_k),
\end{equation}
then $\forall$$e_Q$, we have:
\begin{equation}
    I_Q \geq l(e_Q).
\end{equation}
\end{theorem}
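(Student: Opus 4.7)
The plan is to exploit the optional decomposition $f(x) = x\,l(x)$ from the problem setup, which is exactly where the function $l$ on the right-hand side comes from. Everything then mirrors the upper-bound proof, but with one application of monotonicity replacing the concavity step and one application of Jensen for convex functions replacing the concavity step at the end.

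First, I would substitute $f(x) = x\,l(x)$ directly into the definition of $I_Q$ so that the $p(x_i|z_k)$ factor gets absorbed:
\begin{equation}
I_Q = \sum_{k=1}^{K}\sum_{i=1}^{N} p(z_k)\,p(x_i|z_k)\,l(p(x_i|z_k)).
\end{equation}

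Next, for each fixed $k$ I would bound the inner sum $\sum_i p(x_i|z_k)\,l(p(x_i|z_k))$ from below by $l(p(x_{k^*}|z_k))$. Since $p(x_{k^*}|z_k) = \max_i p(x_i|z_k)$, we have $p(x_i|z_k) \leq p(x_{k^*}|z_k)$ for every $i$; noting that $l$ is non-increasing on the relevant interval for the two impurities of interest ($l(x) = -\log x$ for entropy and $l(x) = 1-x$ for Gini are both decreasing), this gives $l(p(x_i|z_k)) \geq l(p(x_{k^*}|z_k))$. Using $\sum_i p(x_i|z_k) = 1$:
\begin{equation}
\sum_{i=1}^{N} p(x_i|z_k)\,l(p(x_i|z_k)) \;\geq\; l(p(x_{k^*}|z_k)) \sum_{i=1}^{N} p(x_i|z_k) \;=\; l(p(x_{k^*}|z_k)).
\end{equation}

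Finally, I would aggregate over $k$ and appeal to Jensen's inequality for the convex function $l$ with weights $p(z_k)$ (which sum to $1$):
\begin{equation}
I_Q \;\geq\; \sum_{k=1}^{K} p(z_k)\,l(p(x_{k^*}|z_k)) \;\geq\; l\!\left(\sum_{k=1}^{K} p(z_k)\,p(x_{k^*}|z_k)\right) \;=\; l(e_Q),
\end{equation}
which is the claimed bound. The main obstacle (more of a caveat than a real difficulty) is that the middle step relies on $l$ being non-increasing, a property that is not explicitly recorded in the optional condition but does hold for both canonical impurities, entropy and Gini, and is precisely what makes the bound meaningful in the approximation analyses of Theorems 7 and 8. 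If one wanted a fully general statement, this monotonicity would need to be added as a hypothesis or replaced by a tangent-line argument at $p(x_{k^*}|z_k)$ combined with the estimate $\sum_i p(x_i|z_k)^2 \leq p(x_{k^*}|z_k)$, which yields the same conclusion whenever $l'(p(x_{k^*}|z_k)) \leq 0$.
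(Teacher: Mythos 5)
Your proof is correct and follows essentially the same route as the paper's: both arguments reduce $I_Q$ to $\sum_{k} p(z_k)\,l(p(x_{k^*}|z_k))$ and then finish with Jensen's inequality for the convex function $l$. The caveat you raise about needing $l$ to be non-increasing is not actually an extra hypothesis: the paper gets the same intermediate bound from the secant inequality $f(t)\ge \frac{t}{q}f(q)$ for $0\le t\le q$, which follows from concavity together with $f(0)\ge 0$, and which, after dividing by $t$ and using $f(x)=x\,l(x)$, is exactly the statement that $l$ is non-increasing on the relevant interval.
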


\begin{proof}
Using the concavity definition of $f(x)$ in (\ref{eq: concave function}),  and let $t$ and $q$ be the positive scalars such that $0 \leq t \leq q$, we have:

\begin{eqnarray}
   f(t) \geq  (1-\dfrac{t}{q})f(0)+\dfrac{t}{q} f(q)
   =\dfrac{t}{q} f(q).  \label{eq: 21} 
\end{eqnarray}
From the definition of the impurity function, we have:
\begin{small}
\begin{eqnarray}
I_Q&=&\sum_{k=1}^{K}\sum_{i=1}^{N}{p(z_k)f(p(x_i|z_j))} \nonumber\\
&=&\sum_{k=1}^{K}p(z_k) f(p(x_{k^*}|z_k)) \\&+& \sum_{k=1}^{K}p(z_k) \Big(\sum_{i\neq k^*, i=1}^{N}f(p(x_{k^*}|z_k))\Big) \nonumber\\
&\geq&\sum_{k=1}^{K}p(z_k) f(p(x_{k^*}|z_k)) \\&+&\sum_{k=1}^{K}p(z_k) \Big(\sum_{i\neq j^*, i=1}^{N} \dfrac{p(x_i|z_k)}{p(x_{k^*}|z_k)}f(p(x_{k^*}|z_k))\Big) \label{eq: 22}\\
&=&\sum_{k=1}^{K}p(z_k) f(p(x_{k^*}|z_k)) \\&+& \sum_{k=1}^{K}p(z_k) \dfrac{\sum_{i\neq k^*, i=1}^{N} p(x_i|z_k)}{p(x_{k^*}|z_k)}f(p(x_{k^*}|z_k)) \label{eq: 23}\\
&=&\sum_{k=1}^{K}p(z_k) f(p(x_{k^*}|z_k)) \\&+& \sum_{k=1}^{K}p(z_k)\Big(\dfrac{1-p(x_{k^*}|z_k)}{p(x_{k^*}|z_k)}\Big)f(p(x_{k^*}|z_k)) \label{eq: 24}\\
&=&\sum_{k=1}^{K}p(z_k) f(p(x_{k^*}|z_k)) \Big(1+ \dfrac{1-p(x_{k^*}|z_k)}{p(x_{k^*}|z_k)}\Big) \label{eq: 25-a}\\
&=&\sum_{k=1}^{K}p(z_k) f(p(x_{k^*}|z_k))\dfrac{1}{p(x_{k^*}|z_k)} \label{eq: 25}\\
&=&\sum_{k=1}^{K}p(z_k) p(x_{k^*}|z_k) l(p(x_{k^*}|z_k))\dfrac{1}{p(x_{k^*}|z_k)} \label{eq: 26}\\
&=&\sum_{k=1}^{K}p(z_k)  l(p(x_{k^*}|z_k)) \label{eq: 27}\\
&\geq& l(\sum_{k=1}^{K} p(z_k) p(x_{k^*}|z_k)) \label{eq: 28}\\
&=& l(e_Q),
\end{eqnarray}
\end{small}with (\ref{eq: 22}) is due to (\ref{eq: 21}) using $t= p(x_i|z_k)$ and $q=p(x_{k^*}|z_k)$ and noting that $p(x_{k^*}|z_k) \geq p(x_i|z_k)$ $\forall i$, (\ref{eq: 26}) is due to $f(x)=xl(x)$, and (\ref{eq: 28}) due to the Jensen inequality for the convex function $l(x)$. The lower bound is tight i.e., $I_Q=l(e_Q)$ if $e_Q=\dfrac{1}{N}$ or $e_Q=1$.
\end{proof}
\begin{remark}
There is a connection between the lower bound above and the well-known Boy-Chiang upper bound of channel capacity. 
Specifically, for a uniform input distribution, if $f(x)$ is the entropy function, then the lower bound in Theorem \ref{theorem: 2} implies the Boy-Chiang upper bound of channel capacity \cite{chiang2004geometric}. More details  are in the Appendix \ref{apd: boyd-chiang}. \\
\end{remark}

\begin{theorem}({\bf $\mathbf{R(e^{\max})}$-approximation})
\label{theorem: near opt}
Algorithm \ref{alg: finding emax} provides $R(e^{\max})$-approximation for both entropy and Gini index impurities where:
\begin{eqnarray}
\label{eq: remax definition}
R(e^{\max})=\dfrac{u(e^{\max})}{l(e^{\max})}.
\end{eqnarray}
\end{theorem}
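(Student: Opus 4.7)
The plan is to set up a simple two-sided squeeze on the impurity using Theorems~\ref{theorem: 1} and \ref{theorem: 2}, then exploit monotonicity of the bounds together with the fact that Algorithm~\ref{alg: finding emax} attains $e^{\max}$. Let $Q^*$ denote a true minimizer of $I_Q$ and let $\hat{Q}$ denote the quantizer returned by Algorithm~\ref{alg: finding emax}, which by construction (combined with Theorem~\ref{theorem: p(x,y)}) satisfies $e_{\hat{Q}} = e^{\max}$. By Theorem~\ref{theorem: 1} applied to $\hat{Q}$, we get the upper bound $I_{\hat{Q}} \le u(e^{\max})$. By Theorem~\ref{theorem: 2} applied to $Q^*$, we get the lower bound $I_{Q^*} \ge l(e_{Q^*})$. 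The approximation ratio I want to establish is $I_{\hat{Q}}/I_{Q^*}$.

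The next step is to convert $l(e_{Q^*})$ into something involving $e^{\max}$. For this I would verify that the lower-bound function $l(\cdot)$ from Theorem~\ref{theorem: 2} is monotonically decreasing on $(0,1]$ for the two impurities at hand. Recalling $f(x) = xl(x)$, for entropy we have $l(x) = -\log x$ and for Gini we have $l(x) = 1-x$; both are clearly decreasing on $(0,1]$. Since $e_{Q^*} \le e^{\max}$ by definition of $e^{\max}$, monotonicity yields $l(e_{Q^*}) \ge l(e^{\max})$, hence $I_{Q^*} \ge l(e^{\max})$. Chaining the two bounds gives
\begin{equation}
\frac{I_{\hat{Q}}}{I_{Q^*}} \;\le\; \frac{u(e^{\max})}{l(e^{\max})} \;=\; R(e^{\max}),
\end{equation}
which is exactly the desired $R(e^{\max})$-approximation guarantee.

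The only delicate point is a sanity check on the degenerate boundary cases: if $e^{\max} = 1$, then $l(e^{\max}) = 0$ and the ratio is not well defined, but in that situation $I_{\hat{Q}} = u(e^{\max}) = 0$ as well (by the tightness statement at the end of Theorem~\ref{theorem: 2} and the values $f(0)=f(1)=0$ for both Gini and entropy), so $\hat{Q}$ is already optimal and there is nothing to prove. For any $e^{\max} \in (1/N, 1)$, $l(e^{\max})$ is strictly positive for both impurities, so the ratio is well defined and the bound is meaningful.

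Overall, the proof is short and mostly bookkeeping once one notices the right direction of monotonicity of $l$. The main obstacle I anticipate is not the chain of inequalities, which is essentially one line, but rather being careful to state that the argument is specialized to entropy and Gini through the concrete form of $l$, since a general concave $f$ need not yield a decreasing $l$; the theorem's scope is accordingly limited to those two impurities, where monotonicity is immediate. A secondary, purely notational care is that Algorithm~\ref{alg: finding emax} is guaranteed to output a quantizer achieving $e^{\max}$ in all three regimes $K<N$, $K=N$, $K>N$, which is exactly the content of Theorem~\ref{theorem: p(x,y)} combined with the exhaustive enumeration over $\mathcal{V}_K$ (or $\mathcal{V}_N$) inside the algorithm.
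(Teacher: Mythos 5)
Your proposal is correct and follows essentially the same route as the paper's own proof: upper-bound $I_{\hat Q}$ by $u(e^{\max})$ via Theorem~\ref{theorem: 1}, lower-bound $I_{Q^*}$ by $l(e_{Q^*}) \ge l(e^{\max})$ using the monotonicity of $l$ for entropy and Gini together with $e_{Q^*} \le e^{\max}$, and take the ratio. Your additional remarks (explicitly checking that $l$ is decreasing for the two concrete impurities, handling the degenerate case $e^{\max}=1$, and noting that the algorithm actually attains $e^{\max}$) are sensible refinements of the same argument rather than a different approach.
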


\begin{proof}
Let $I_{Q^*}$ be the minimum impurity and $I_{Q_{e^{\max}}}$ be the impurity of the partition produced by running Algorithm \ref{alg: finding emax}.
Now, assume that $Q^*$ produces $e_{Q^*}$. From the definition of $e^{\max}$, $e_{Q^*} \leq e^{\max}$. Moreover, it is straightforward to show that 
$l(e_Q)$ for both entropy and Gini index impurities are decreasing functions. Thus,  $I_{Q^*} \geq l(e_{Q^*}) \geq l(e^{\max})$. Therefore,
\begin{equation}
 \frac{I_{Q_{e^{\max}}}}{I_{Q^*}} \le \frac{u(e^{\max})}{\min_{e_Q}{l(e_Q)}} =  \frac{u(e^{\max})}{l(e^{\max})}=R(e^{\max}).   
\end{equation}

\begin{figure}[h]
  \centering
  $\begin{array}{cc}
  \hspace{-0.1 in} \includegraphics[width=1.8 in]{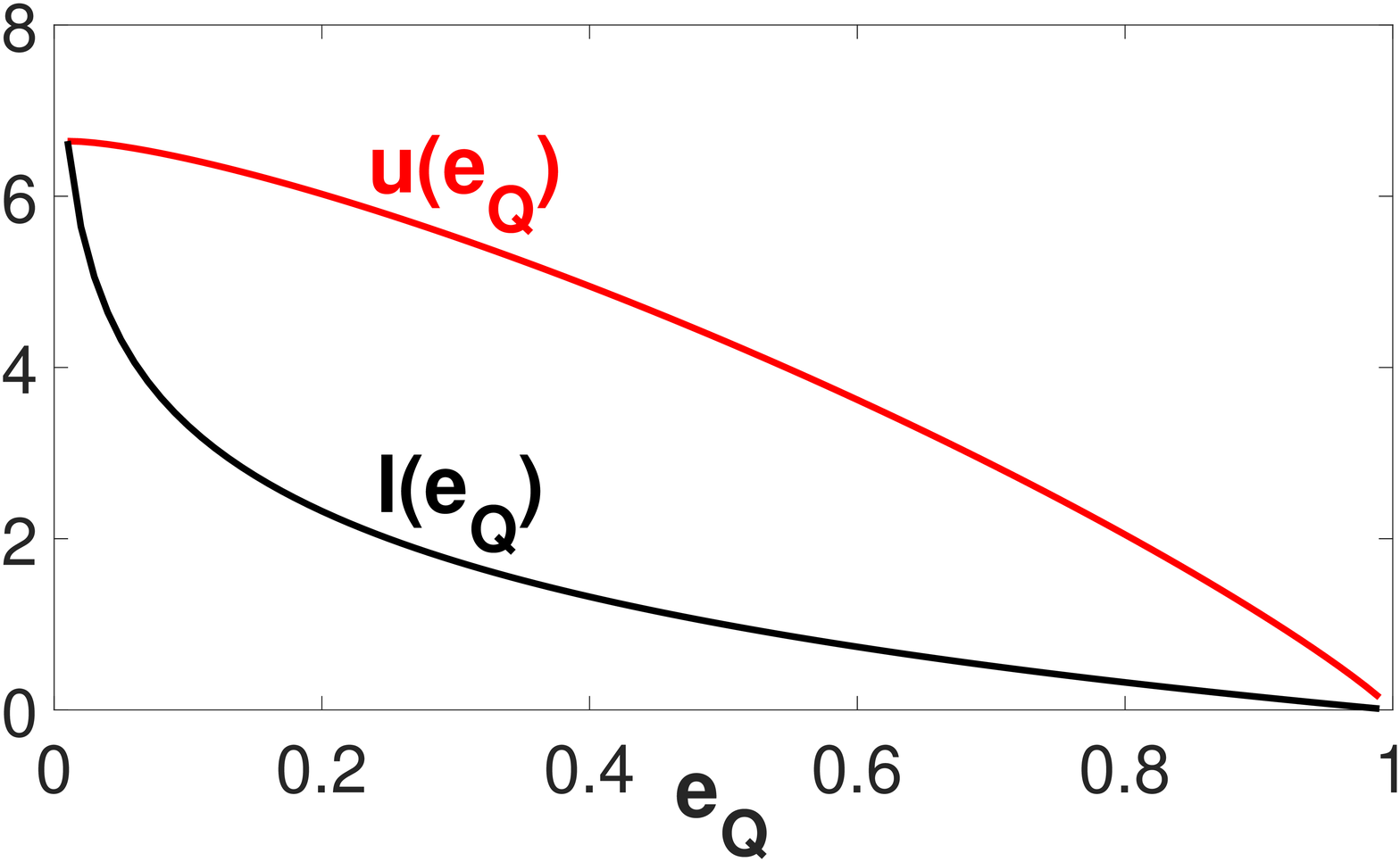} & \hspace{-0.1 in} \includegraphics[width=1.8 in]{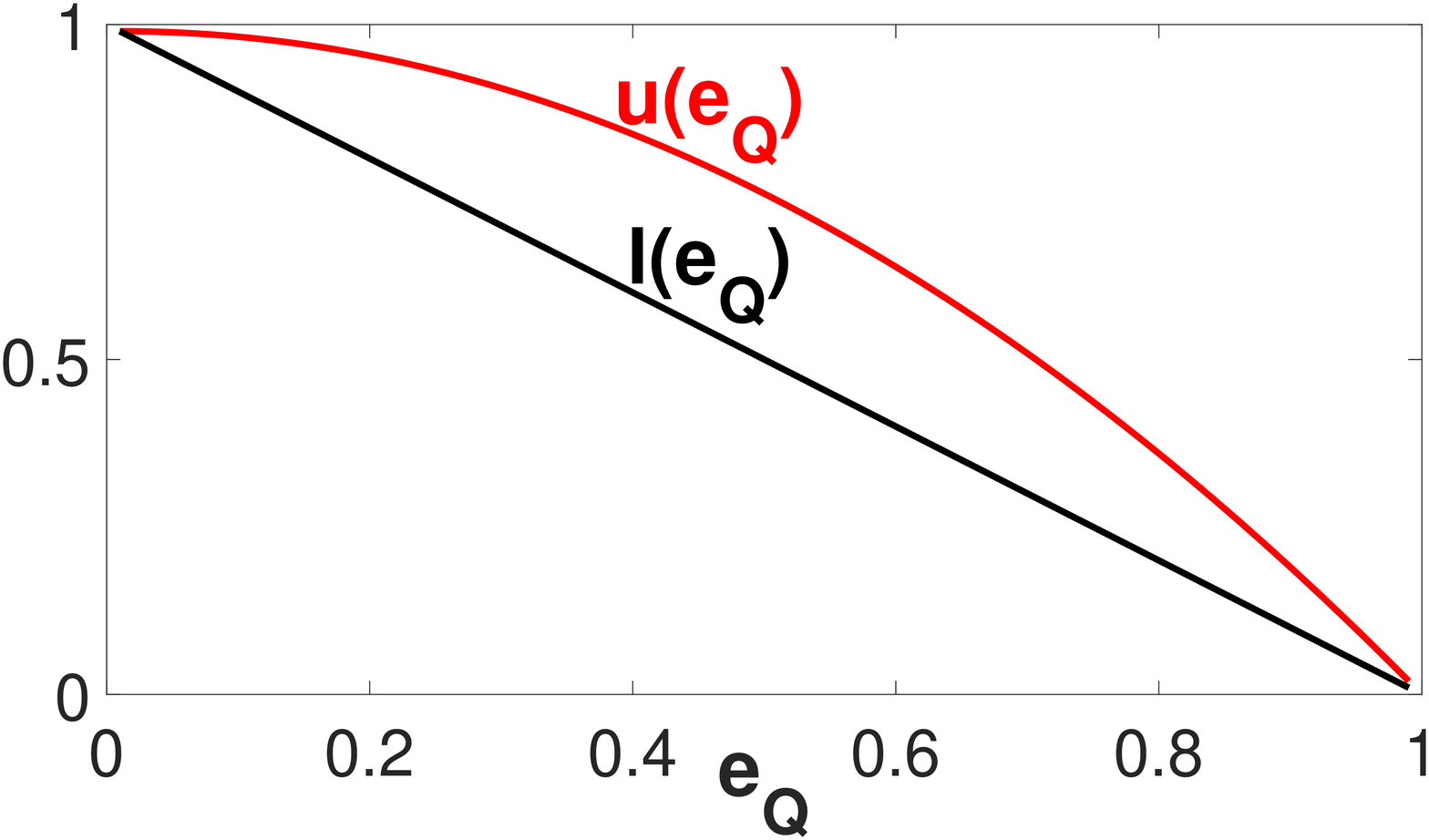} \\
  \hspace{-0.3 in} (a) & \hspace{-0.1 in} (b)
  \end{array}$
  \caption{The monotonic decreasing of $u(e_Q)$ and $l(e_Q)$ for (a) entropy impurity and (b) Gini index impurity using $e_Q \in (0.01,0.99)$ and $N=100$.  }\label{fig: 2-A}
\end{figure}

Thus, the impurity produced by Algorithm \ref{alg: finding emax} is guaranteed to be away from the true solution by at most a  factor of $R(e^{\max})$. Fig. \ref{fig: 2-A} shows $u(e_Q)$ and $l(e_Q)$ vs. $e_Q \in (0.01,0.99)$ using $N=100$ for both the entropy impurity and the Gini index impurity. As seen, $u(e_Q)$ and $l(e_Q)$ are monotonic decreasing functions for both entropy and Gini index impurities. Moreover, the upper bound and the lower bound are tight and equal when $e_Q=\dfrac{1}{N}$ or $e_Q=1$.
\end{proof}

The result in Theorem \ref{theorem: near opt} can be applied for any concave impurity function $f(x) = xl(x)$ with  $l(x)$ being a non-increasing function. Next, we show that $R(e^{\max})$-approximation is better than the approximation in \cite{cicalese2019new} for both the entropy impurity and the Gini index impurity.\\ 

\begin{theorem}
\textbf{}
\begin{itemize}
    \item For Gini index impurity,
    \begin{equation}
    \label{eq: remax for gini}
        R(e^{\max})=1+e^{\max}.
    \end{equation}
    \item For Entropy impurity,
    \begin{equation}
\label{eq: remax for entropy 2}
    R(e^{\max})=\dfrac{H(e^{\max})+(1-e^{\max})\log(N-1)}{-\log(e^{\max})}.
\end{equation}
\end{itemize}

\end{theorem}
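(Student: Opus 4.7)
The plan is direct substitution: specialize the upper bound $u(e_Q)$ from Theorem~\ref{theorem: 1} and the lower bound $l(e_Q)$ from Theorem~\ref{theorem: 2} to each impurity by plugging in the appropriate $f$ and $l$ (where $l$ is defined through $f(x)=xl(x)$ as in the problem setup), and then form the ratio $R(e^{\max})=u(e^{\max})/l(e^{\max})$ and simplify algebraically. No deeper machinery is required; the two cases decouple completely.

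For the entropy impurity I would take $f(x)=-x\log x$ and $l(x)=-\log x$. Substituting into $u(e_Q)=f(e_Q)+(N-1)f\!\bigl(\tfrac{1-e_Q}{N-1}\bigr)$ and splitting $\log\tfrac{1-e_Q}{N-1}=\log(1-e_Q)-\log(N-1)$, the terms regroup into the binary entropy $H(e^{\max})=-e^{\max}\log e^{\max}-(1-e^{\max})\log(1-e^{\max})$ plus the residual $(1-e^{\max})\log(N-1)$. Dividing this expression by $l(e^{\max})=-\log e^{\max}$ reproduces~(\ref{eq: remax for entropy 2}) as an exact equality.

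For the Gini impurity I would take $f(x)=x(1-x)$ and $l(x)=1-x$. Direct substitution gives
\begin{equation*}
u(e_Q)=e_Q(1-e_Q)+(1-e_Q)\bigl(1-\tfrac{1-e_Q}{N-1}\bigr),
\end{equation*}
which factors cleanly as $u(e_Q)=(1-e_Q)\bigl[\,1+e_Q-\tfrac{1-e_Q}{N-1}\,\bigr]$. Dividing by $l(e^{\max})=1-e^{\max}$ cancels the common factor and yields $R(e^{\max})=1+e^{\max}-\tfrac{1-e^{\max}}{N-1}\le 1+e^{\max}$, which is the content of~(\ref{eq: remax for gini}).

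I do not anticipate any genuine obstacle: each case is essentially one line of bookkeeping once the substitutions are in place. The only subtlety worth flagging is in the Gini case, where the exact ratio undershoots $1+e^{\max}$ by the non-negative correction $\tfrac{1-e^{\max}}{N-1}$, so the identity in (\ref{eq: remax for gini}) should be interpreted as a sharp upper bound that becomes tight in the limit $N\to\infty$ or as $e^{\max}\to 1$; with this reading, both formulas follow immediately from the two algebraic manipulations above.
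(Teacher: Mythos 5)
Your proof is correct and follows essentially the same route as the paper: direct substitution of $f$ and $l$ into $R(e^{\max})=u(e^{\max})/l(e^{\max})$ from (\ref{eq: remax definition}) and straightforward algebraic simplification, yielding an exact equality for entropy and the Gini expression. Your flag that the Gini ratio is exactly $1+e^{\max}-\tfrac{1-e^{\max}}{N-1}\le 1+e^{\max}$ (so (\ref{eq: remax for gini}) is really a sharp upper bound rather than an identity) matches what the paper's own appendix derivation shows before it bounds the ratio by $1+e^{\max}$ and then by $2$.
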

\begin{proof}
The proof follows (\ref{eq: remax definition}) by using the upper bound and the lower bound in Theorem \ref{theorem: 1} and Theorem \ref{theorem: 2}, respectively. The detail of proof can be viewed in Appendix \ref{apd: proof of theorem 8} and \ref{apd: proof of theorem 9}.
\end{proof}

\begin{theorem}
\label{theorem: 8}
Algorithm \ref{alg: finding emax} provides a 2-approximation for Gini index impurity.
\end{theorem}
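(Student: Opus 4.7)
The plan is to invoke the previous theorem directly. By the Gini-index case of that theorem (equation (\ref{eq: remax for gini})), the approximation ratio produced by Algorithm \ref{alg: finding emax} satisfies
\begin{equation*}
R(e^{\max}) = 1 + e^{\max}.
\end{equation*}
So the only thing left to argue is the crude bound $e^{\max} \le 1$, after which the conclusion $R(e^{\max}) \le 2$ is immediate.

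To establish $e^{\max} \le 1$, I would go back to the defining expression
\begin{equation*}
e^{\max} = \max_Q \sum_{k=1}^{K} p(z_k)\, p(x_{k^*}|z_k).
\end{equation*}
For every quantizer $Q$ and every partition index $k$, the conditional probability $p(x_{k^*}|z_k)$ lies in $[0,1]$, so each summand is bounded above by $p(z_k)$. Summing over $k$ and using $\sum_{k=1}^{K} p(z_k) = 1$ gives $e_Q \le 1$ for every $Q$, hence $e^{\max} \le 1$. Plugging this into the identity $R(e^{\max}) = 1 + e^{\max}$ from Theorem 7 yields $R(e^{\max}) \le 2$, which by Theorem \ref{theorem: near opt} means Algorithm \ref{alg: finding emax} is a $2$-approximation for Gini index impurity.

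Since the statement is essentially a one-line consequence of the already-proved formula for $R(e^{\max})$, there is no real obstacle here; the only mild subtlety is to remember that $e^{\max}$ is a weighted average of conditional probabilities rather than a single probability, so one uses $\sum_k p(z_k) = 1$ rather than a direct appeal to $p \le 1$. I would also note, for completeness, that the bound is tight in the limit $e^{\max} \to 1$ (which corresponds to a near-deterministic quantizer), explaining why no improvement on the constant $2$ is possible from this upper-/lower-bound approach alone.
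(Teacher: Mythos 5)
Your proposal is correct and follows essentially the same route as the paper: the paper's own proof of this theorem is precisely the computation behind the formula $R(e^{\max})=1+e^{\max}$ (in fact it derives the slightly sharper $R(e^{\max})=1+e^{\max}-\tfrac{1-e^{\max}}{N-1}$) followed by the bound $e^{\max}\le 1$. Your explicit justification that $e_Q=\sum_k p(z_k)p(x_{k^*}|z_k)\le\sum_k p(z_k)=1$ is a small but welcome addition that the paper leaves implicit.
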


\begin{proof}
Please see Appendix \ref{apd: proof of theorem 8}.
\end{proof}

\begin{remark}
Algorithm 1 provides a 2-approximation for Gini index impurity while the algorithm in \cite{cicalese2019new} provides a 3-approximation in the worst case.
\end{remark}

\begin{theorem}
\label{theorem: 9}
The entropy impurity approximation provided by Algorithm \ref{alg: finding emax} is $R(e^{\max})$ and $R(e^{\max}) < \log^2 N $ if:
\begin{eqnarray}
\label{eq: condition for better approximation entropy impurity}
   N \geq N^{\min}=2^{S(e^{\max})},
\end{eqnarray}
where:
\begin{align*}
    S(e^{\max})=&\dfrac{1-e^{\max}}{-2\log (e^{\max})} \\&+ 
    \dfrac{\sqrt{4 H(e^{\max})(-\log(e^{\max}))+(1-e^{\max})^2}}{-2 \log(e^{\max})},
\end{align*}and $H(x)=-(x\log{x}+(1-x)\log(1-x))$ is the binary entropy of $x$. 
\end{theorem}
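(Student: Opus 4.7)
The plan is to start from the closed-form expression
\[
R(e^{\max}) = \frac{H(e^{\max}) + (1-e^{\max})\log(N-1)}{-\log(e^{\max})}
\]
supplied by Theorem 7 (equation \ref{eq: remax for entropy 2}), and rearrange the desired inequality $R(e^{\max}) < \log^2 N$ into a quadratic inequality in the single variable $L := \log N$. Writing the inequality as
\[
H(e^{\max}) + (1-e^{\max})\log(N-1) < -\log(e^{\max})\,\log^2 N,
\]
I would first note the harmless bound $\log(N-1) \le \log N$, which lets me replace $\log(N-1)$ by $L$ on the left while only strengthening what must be shown. So it suffices to establish
\[
-\log(e^{\max})\,L^2 \;-\; (1-e^{\max})\,L \;-\; H(e^{\max}) \;>\; 0.
\]

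Since $0 < e^{\max} < 1$ (the boundary cases $e^{\max} \in \{1/N,1\}$ are handled by the tightness remark after Theorem 2 and can be discussed separately), the coefficient $-\log(e^{\max})$ is strictly positive, so the left-hand side is a convex upward parabola in $L$. The discriminant is
\[
\Delta = (1-e^{\max})^2 + 4 H(e^{\max})\bigl(-\log(e^{\max})\bigr) \;>\; 0,
\]
so the parabola has two real roots, and the inequality holds precisely when $L$ exceeds the larger root. The quadratic formula then gives
\[
L \;>\; \frac{(1-e^{\max}) + \sqrt{(1-e^{\max})^2 + 4 H(e^{\max})(-\log(e^{\max}))}}{-2\log(e^{\max})},
\]
and the right-hand side is exactly the quantity $S(e^{\max})$ stated in the theorem. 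Exponentiating base~$2$ yields the threshold $N \geq N^{\min} = 2^{S(e^{\max})}$.

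The steps are therefore: (i) substitute Theorem 7's formula into $R(e^{\max}) < \log^2 N$; (ii) bound $\log(N-1) \le \log N$ to isolate a clean quadratic in $L = \log N$; (iii) verify positivity of the leading coefficient and the discriminant so that the quadratic formula applies cleanly; (iv) identify the larger root with $S(e^{\max})$; (v) exponentiate to conclude. The main (minor) obstacle I anticipate is being careful about the corner cases: the argument implicitly assumes $e^{\max} \in (1/N, 1)$, so that $\log(e^{\max}) < 0$ and the quadratic is nondegenerate; the endpoints $e^{\max} = 1/N$ and $e^{\max}=1$ correspond to the tight cases of the upper and lower bounds where $R(e^{\max}) = 1$, and these trivially satisfy $R(e^{\max}) < \log^2 N$ for $N \ge 2$, so they do not compromise the conclusion. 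Apart from this, the proof is a direct algebraic manipulation, and no new structural argument is needed beyond the bounds established in Theorems 1, 2, and 7.
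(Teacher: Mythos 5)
Your proposal is correct and follows essentially the same route as the paper's proof in Appendix H: bound $\log(N-1)$ by $\log N$, reduce $R(e^{\max})<\log^2 N$ to a quadratic inequality in $\log N$ with positive leading coefficient $-\log(e^{\max})$, and identify the relevant root with $S(e^{\max})$. The only cosmetic difference is that you invoke the quadratic formula and discriminant where the paper completes the square explicitly; the computation and conclusion are identical, and your extra care about the degenerate endpoints of $e^{\max}$ is a harmless refinement.
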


\begin{proof}
 Please see the Appendix \ref{apd: proof of theorem 9} for the details of proof. 
\end{proof}

\begin{remark}
If the condition in Theorem \ref{theorem: 9} is satisfied and $K \geq N$,  then $R(e^{\max}) < \log^2 N =\log^2(\min \{K,N\})$.  Therefore, Theorem \ref{theorem: 9} provides a sufficient condition where the approximation produced by Algorithm \ref{alg: finding emax} is better than that produced by the algorithm in \cite{cicalese2019new}. In reality, $R(e^{\max})$ is smaller than $\log^2 N$ for a wider range of $N$.
\end{remark}

\begin{figure}
\begin{center}
$\begin{array}{cc}
\hspace{-0.25 in} \includegraphics[scale=0.1]{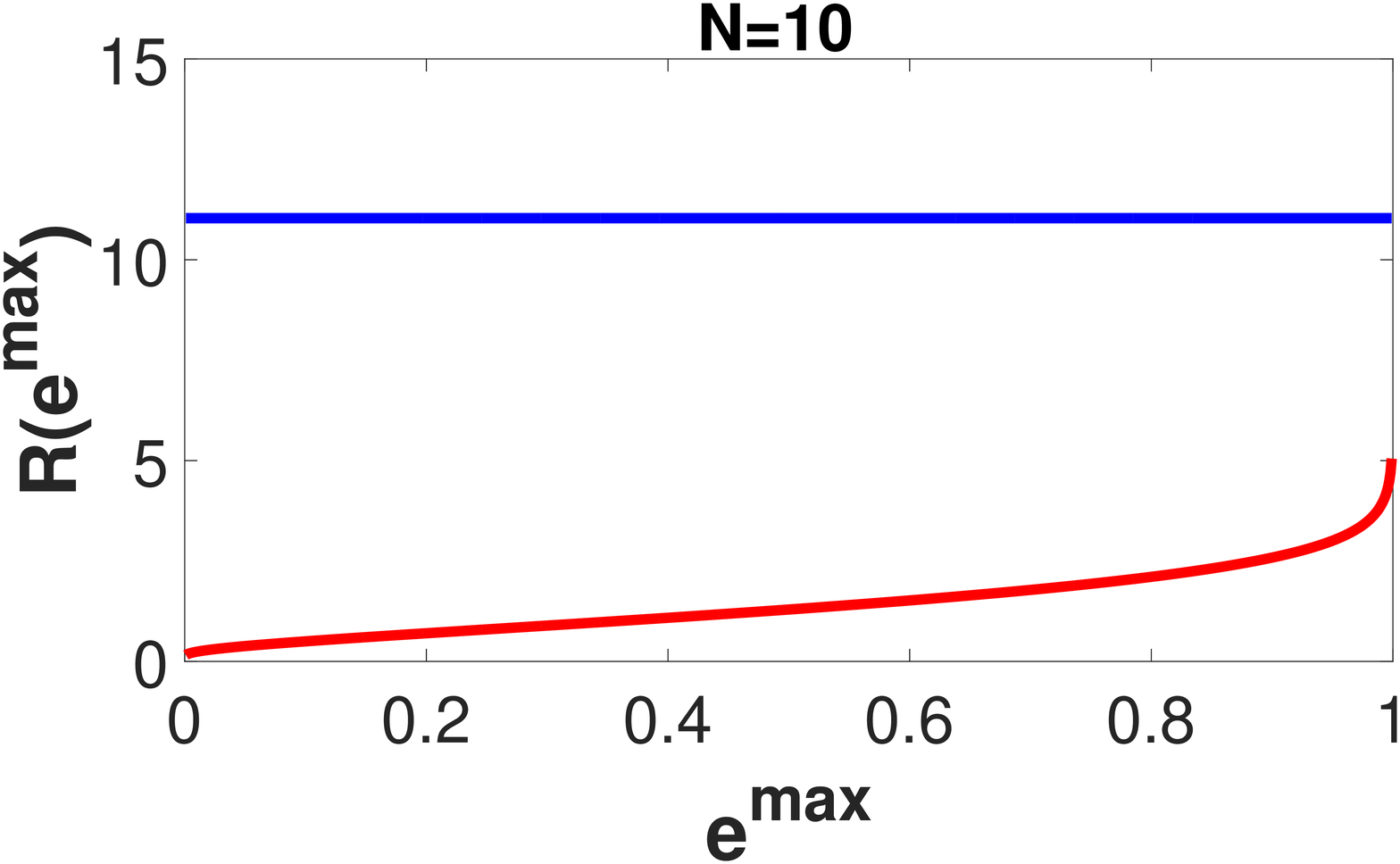} & 
\hspace{-0.3 in} \includegraphics[scale=0.1]{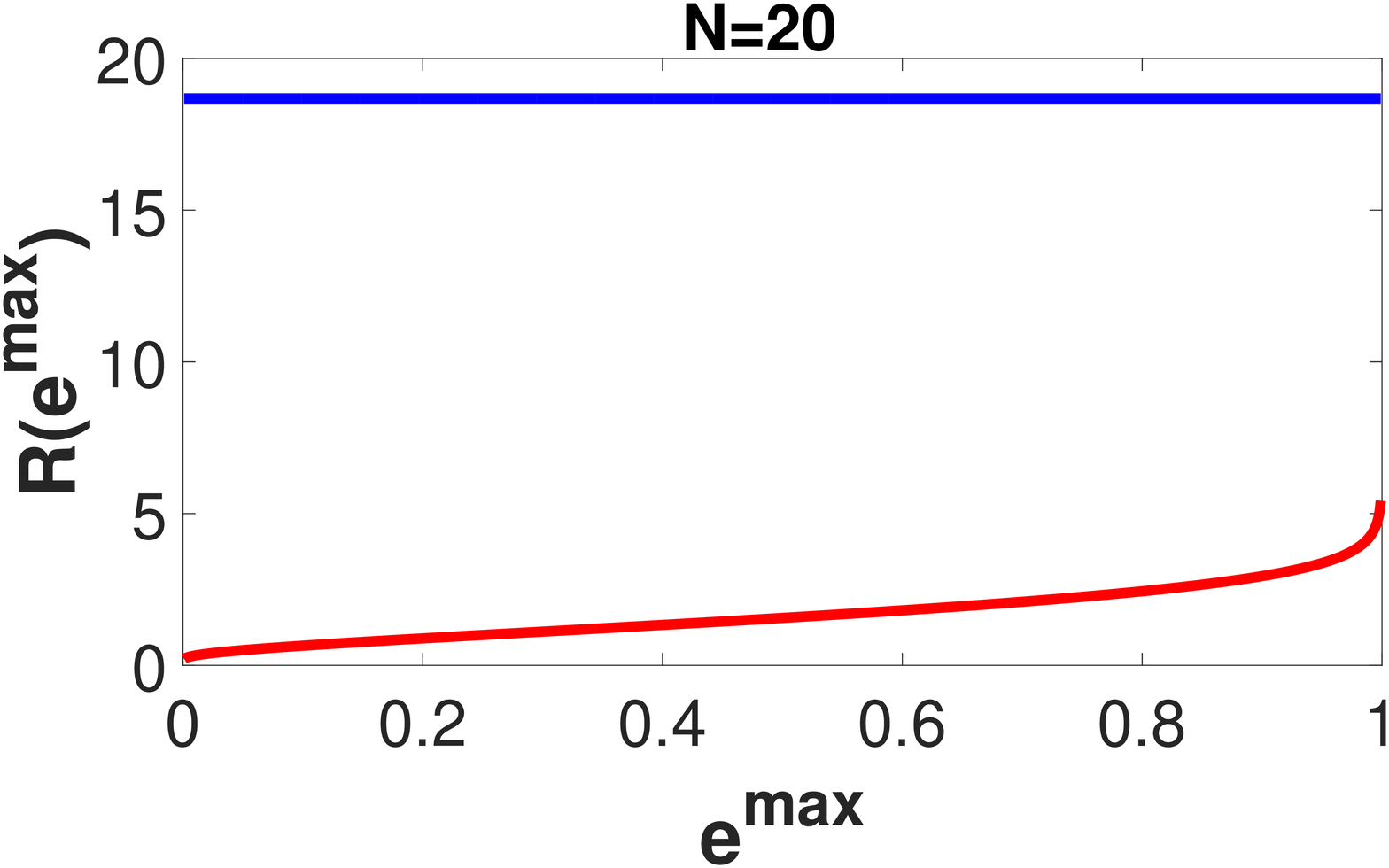}\\
\hspace{-0.15 in} (a) & \hspace{-0.2 in}(b) \\
\hspace{-0.25 in}\includegraphics[scale=0.1]{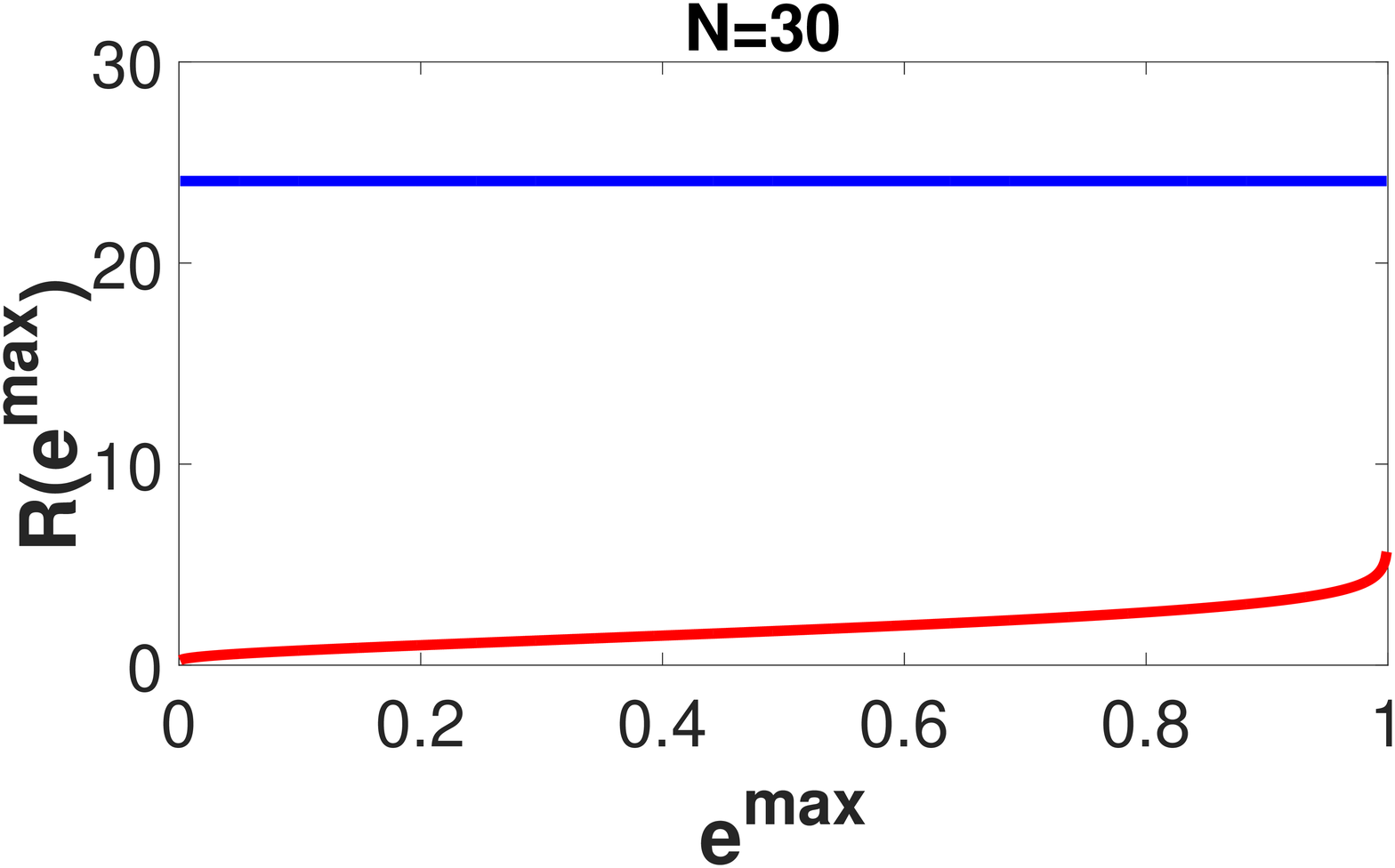} &
\hspace{-0.3 in}\includegraphics[scale=0.1]{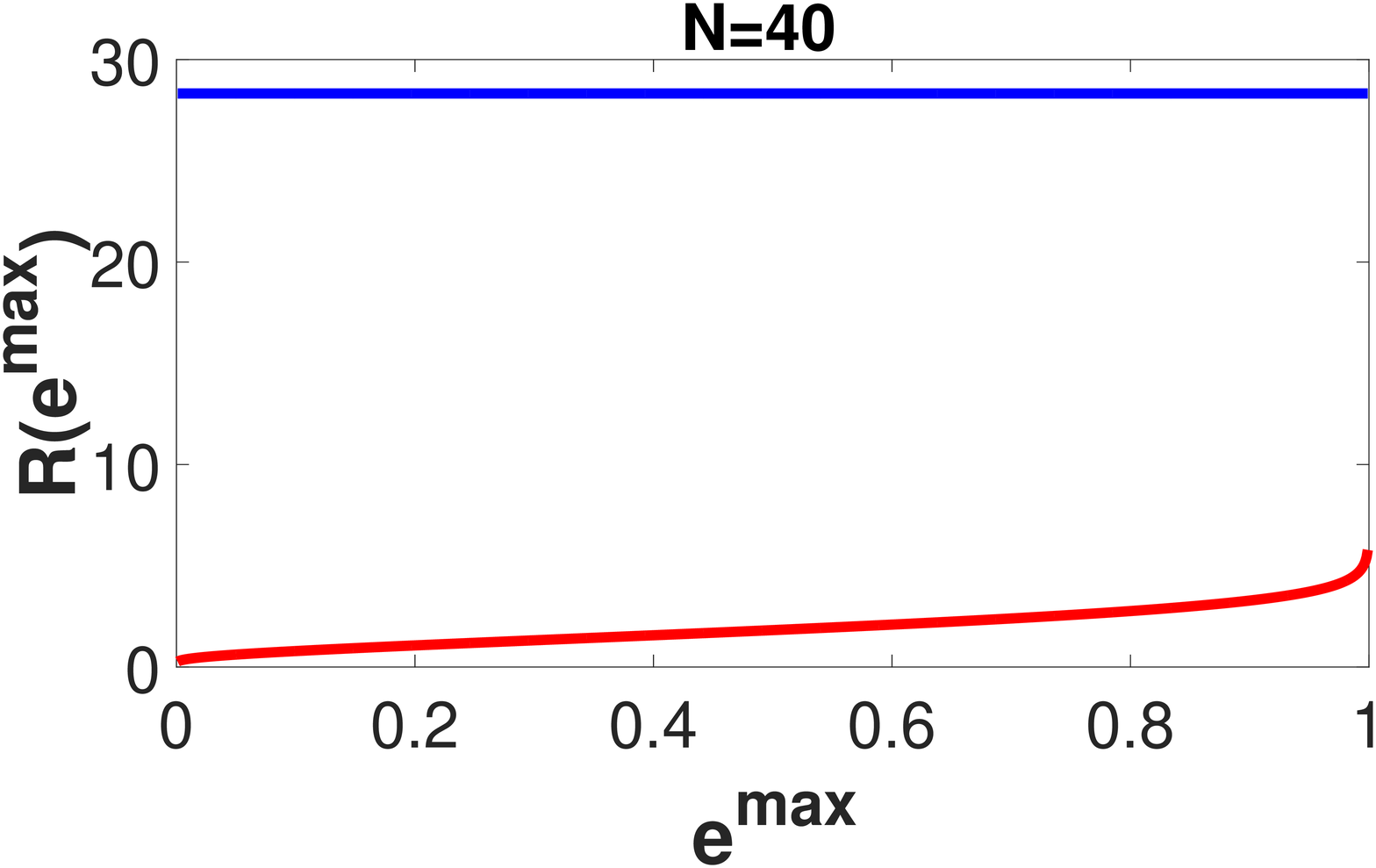} \\
\hspace{-0.15 in} (c) & \hspace{-0.2 in}(d)
\end{array}$
\end{center}
\caption{$R(e^{\max})$-approximation for  entropy impurity using (a) $N=10$; (b) $N=20$;  (c) $N=30$; (d) $N=40$.  Our approximation $R(e^{\max})$ are the red curves while the approximations of the algorithm in \cite{cicalese2019new} ($\log^2 N $, assume $K \geq N$) are the blue curves.}
\label{fig: 3}
\end{figure}

Fig. \ref{fig: 3} shows the performance bound of the proposed algorithm vs. the state of the art in \cite{cicalese2019new}.
$R(e^{\max})$ vs. $e^{\max} \in (0.01,0.99)$ for $N=10$, $N=20$, $N=30$, and $N=40$ are plotted in red while the approximations in \cite{cicalese2019new} ($\log^2 N $ when $K \geq N$) are plotted in blue. As seen,  the red curves are always below the blue curves. Moreover, the gaps between our approximation and that of \cite{cicalese2019new} are proportional to the size of $N$. That said, for large values of $N$, our approximation is progressively better than that of \cite{cicalese2019new}. 

We also note that $S(e^{\max})$ is a monotonic increasing function as shown in Fig. \ref{fig: 2}-(a).
Thus, if $e^{\max}$ increases, then $N^{\min}$ increases. For example, if $e^{\max}=0.5$ then (\ref{eq: condition for better approximation entropy impurity}) holds for $N^{\min}=2.42$, if $e^{\max}=0.8$, (\ref{eq: condition for better approximation entropy impurity}) holds for $N^{\min} = 3.58$, if $e^{\max}=0.9$, (\ref{eq: condition for better approximation entropy impurity}) holds for $N^{\min} = 4.34$, if $e^{\max}=0.999$, (\ref{eq: condition for better approximation entropy impurity}) holds for $N^{\min}= 9.06$. 
Fig. \ref{fig: 2}-(b) illustrates the relationship between $e^{\max}$ and $N^{\min}$. We also note that for the real datasets, $e^{\max}$ is typically small which results in a small value of $N^{\min}$. For example, $e^{\max}$ are $0.2420$ and $0.2185$, and $N^{\min}$ are $2.15$ and $2.09$ for 20NEWS dataset and RCV1 dataset, respectively. Please see Table I and Table II in Section \ref{sec: simulation} for the values of $e^{\max}$. 

\begin{figure}
  \centering
  $\begin{array}{cc}
  \hspace{-0.2 in} \includegraphics[width=1.9 in]{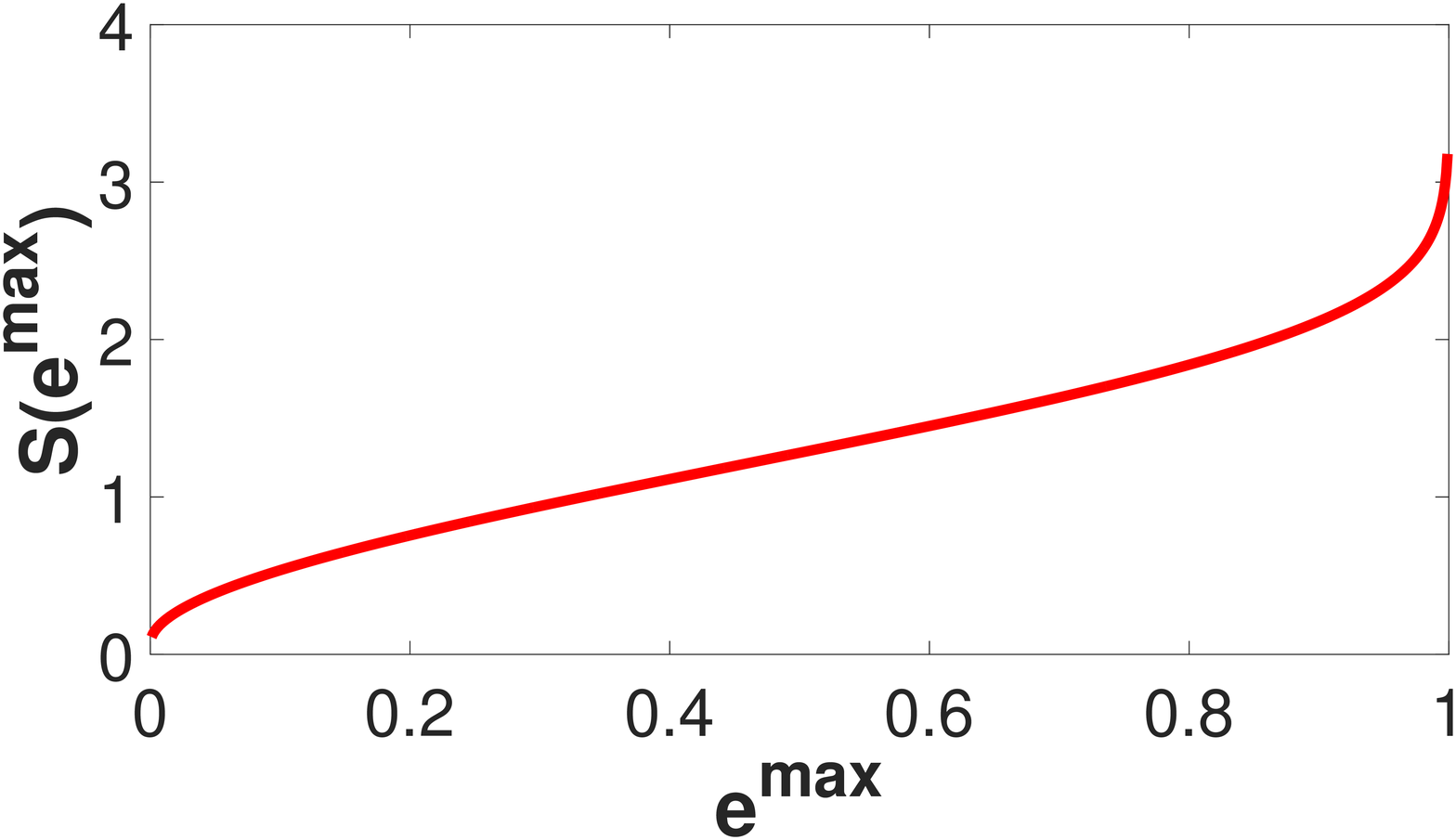} & \hspace{-0.25 in} \includegraphics[width=1.9 in]{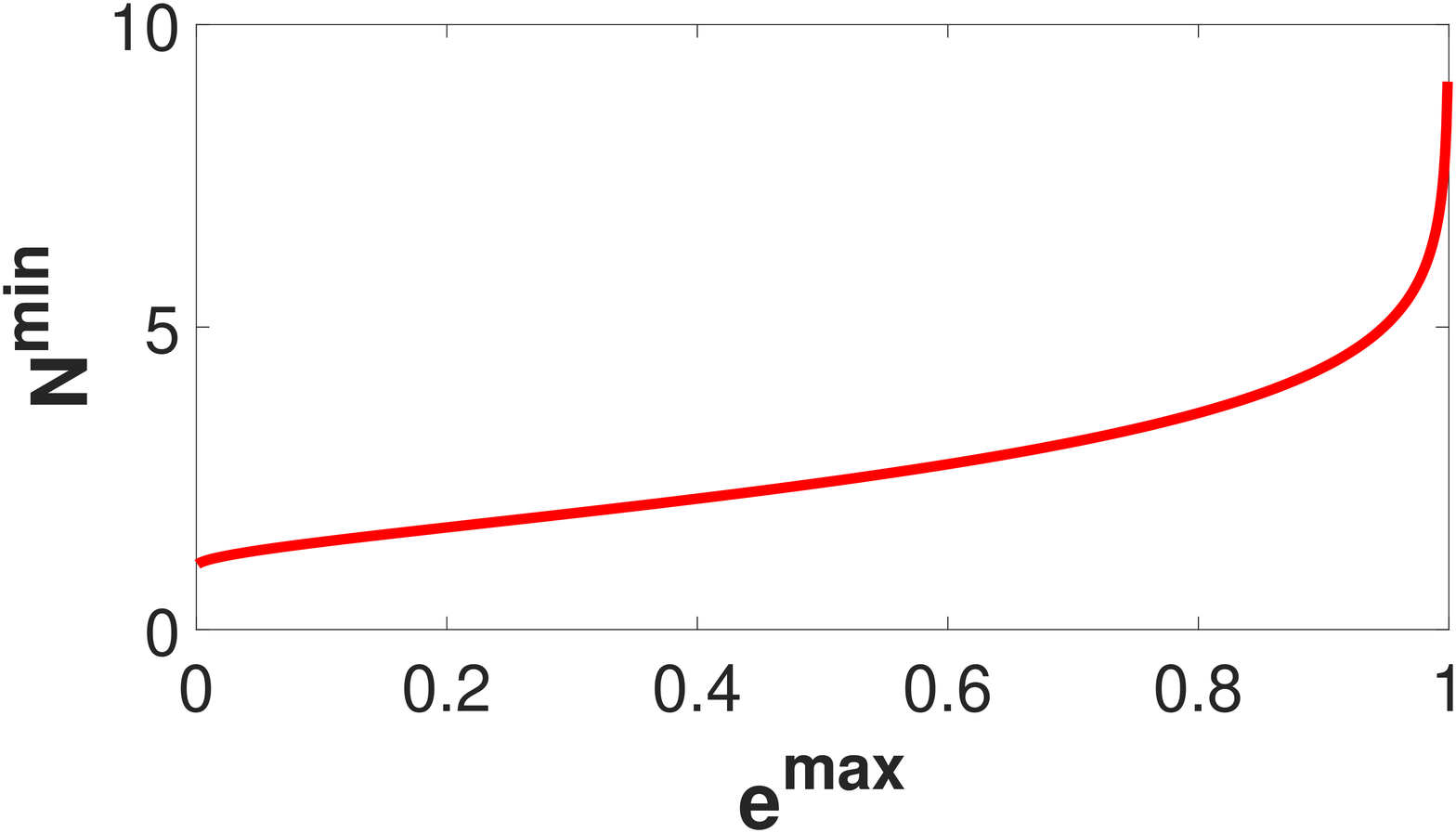} \\
  \hspace{-0.2 in} (a) & \hspace{-0.25 in} (b)
  \end{array}$
  \caption{(a) $S(e^{\max})$ as a function of $e^{\max}$; (b) $N^{\min}=2^{S(e^{\max})}$ as a function of $e^{\max}$.  }\label{fig: 2}
\end{figure}

\begin{remark} For $K \geq N$, the approximation guaranteed by Algorithm \ref{alg: finding emax} for entropy impurity is better than that of the state-of-art approximation in \cite{cicalese2019new} for most the value of $N$.   
On the other hand, when $K <N$, it is possible that the approximation in \cite{cicalese2019new} provides a better bound than our approximation i.e., $\log^2 (\min \{K,N\}) < R(e^{\max})$ due to $\min \{K,N\}=K$ completely depends on $K$ while $R(e^{\max})$ in (\ref{eq: remax for entropy 2}) depends on both $e^{\max}$ and $N$. Therefore, a smaller value of $K$, a higher chance that the algorithm in \cite{cicalese2019new} provides a better approximation. For example, consider the 20NEWS dataset having  $e^{\max}=0.2420$, using $N=20$, the approximation in \cite{cicalese2019new} is better than our approximation if $K \leq 2.64$. Similarly, consider the RCV1 dataset having $e^{\max}=0.2185$, using $N=103$, the approximation in \cite{cicalese2019new} is better than our approximation if $K \leq 5.97$. To that end, our algorithm still provides a better approximation for a wide range of $K$ even if $K < N$. For example, our bound is better $\forall K \geq 2.64$ using 20NEWS dataset, and $\forall K \geq 5.97$ using RCV1 dataset regardless of $N$. Please see detail of these datasets in Sec. \ref{sec: simulation}. 
\end{remark}

\section{Enhanced Algorithms}
\label{sec: greedy algs}

In the previous section, we show that the proposed Algorithm \ref{alg: finding emax} is near-optimal.   However, there are some main drawbacks that limit the applications of Algorithm \ref{alg: finding emax}. Particularly,
\begin{itemize}
    \item When $K>N$, Algorithm \ref{alg: finding emax} produces the optimal partitions containing exactly $N$ partitions due to $\mathcal{V} = \mathcal{V}_N$. Therefore, even though that Algorithm \ref{alg: finding emax} still achieves the theoretical bounds, it produces $(K-N)$ empty partitions which is less useful. 
    
    \item When $K<N$, the time complexity of Algorithm \ref{alg: finding emax} in the worst case is exponential in $N$ which is less desirable when the dimension of the data is large. 
    
    \item The partitions produced by Algorithm \ref{alg: finding emax} might not be locally optimal partitions. The necessary condition for optimal partitions can be viewed in Theorem 1 of \cite{coppersmith1999partitioning}.
\end{itemize}
To resolve these problems, we propose several modifications of Algorithm \ref{alg: finding emax} which results in two linear-time complexity algorithms. These algorithms are based on greedy-splitting (Algorithm \ref{alg: greedy splitting}) and greedy-merge (Algorithm \ref{alg: greedy merge}) of the partitions produced by Algorithm \ref{alg: finding emax}. 
\vspace{-10 pt}
\subsection{Handling the case $K > N$: greedy-splitting algorithm}
\label{apd: handingle K > N}
\vspace{-5 pt}
\begin{footnotesize}
\begin{algorithm}
\caption{ Greedy-splitting algorithm for $K > N$.}
\label{alg: greedy splitting}
\begin{algorithmic}[1]
\State{\textbf{Input}: Dataset $Y=\{y_1,\dots,y_M\}$ and  $p(x_i,y_j)$}.

\State{\textbf{Output}: Partition $Z=\{z_1,z_2,\dots,z_K\}$}.  

\State{\hspace{0.2 in} \textbf{Step 1}: Run Algorithm \ref{alg: finding emax} to obtain $N$-partition $z_1,z_2,\dots,z_N$.}

\State{\hspace{0.2 in} \textbf{Step 2:} Greedy splitting.}

\State{\hspace{0.4 in} \textbf{$t=1$}}

\State{\hspace{0.4 in} \textbf{While: $t \leq K-N$}}

\State{\hspace{0.6 in} Finding the largest impurity.}
\begin{equation}
\label{eq: finding largest impurity}
    i^*=\max_i F(\mathbf{p}_{\mathbf{x},z_i}).
\end{equation}

\State{\hspace{0.6 in} Splitting based on the largest attribution.}
\begin{equation}
    j^*=\max_j p(x_j|z_{i^*}).
\end{equation}
\State{\hspace{0.6 in} For $y_q \in z_{i^*}$.}

\State{\hspace{0.8 in} If $p(x_{j^*}|y_q) \leq p(x_{j^*}|z_{i^*})$.}

     $$Q(y_q) \rightarrow z_{i^*}.$$
     
 \State{\hspace{0.8 in} Else $p(x_{j^*}|y_q) > p(x_{j^*}|z_{i^*})$.}  
 
    $$Q(y_q) \rightarrow z_{N+t}.$$

\State{\hspace{0.6 in} $t=t+1$.}

\State{\hspace{0.4 in} \textbf{End While}.}

\State{\hspace{0.2 in} \textbf{Step 3}: Return $K$ partitions.}

\State{\textbf{Return:} Return $K$ partitions.}
\end{algorithmic}
\end{algorithm}
\end{footnotesize}

To resolve the problem of $(K-N)$ empty partitions when $K>N$, we propose a so-called greedy-splitting algorithm (Algorithm \ref{alg: greedy splitting}). The first step of greedy-splitting algorithm is using Algorithm \ref{alg: finding emax} to generate $N$ non-empty partitions.  Next, by greedy splitting, one can  generate more $(K-N)$ non-empty partitions to obtain total $K$ partitions. As will be shown later, Algorithm \ref{alg: greedy splitting} runs in  $O(KNM)$ and achieves all the theoretical bounds of Algorithm \ref{alg: finding emax}.

\textbf{Algorithm:} The first step of greedy-splitting algorithm is finding the partition that has the largest impurity (line 7, Algorithm \ref{alg: greedy splitting}). Next, this partition is separated based on the largest attribution.  For example, if the largest impurity partition is $z_{i^*}$, and recall that: $$j^*=\max_{1 \leq j \leq N} p(x_j|z_{i^*}).$$ 
Then  $\forall y_q \in z_{i^*}$, $p(x_{j^*}|y_q)$ is the largest attribution of $y_q$. Using $p(x_{j^*}|z_{i^*})$ as a threshold, by comparing $p(x_{j^*}|y_q)$ to $p(x_{j^*}|z_{i^*})$, $ \forall y_q \in z_{i^*}$, $y_q$ is assigned into two new partitions (line 10 and 11, Algorithm \ref{alg: greedy splitting}).  The process repeats until $K$ partitions are generated.  Although the splitting based on $p(x_{j^*}|z_{i^*})$ as a threshold is a heuristic method, it guarantees a better impurity than that provided by Algorithm \ref{alg: finding emax} as will be shown later. The pseudo-code of our splitting procedure is presented in Algorithm \ref{alg: greedy splitting}.

\textbf{Proof of better approximation:} From Proposition 1 in \cite{coppersmith1999partitioning}, if $z_k= z_k^1 \cup z_k^2$ and $z_k^1 \cap z_k^2 = \emptyset$, then the impurity in $z_k$ is at least as large as the total impurity in $z_k^1$ and $z_k^2$. In other words, the impurity of a set always decreases after splitting.
Therefore, by splitting, Algorithm \ref{alg: greedy splitting} produces a new partition having the impurity is monotonically decreased  over each splitting step. Finally, the impurity of $K$ partitions produced by greedy-splitting algorithm is less than or at least equal the impurity provided by Algorithm \ref{alg: finding emax}. Thus, the partitions induced by greedy-splitting algorithm must satisfy our theoretical bounds in Sec. \ref{sec: proof of near optimal}.  

\textbf{Time complexity of Algorithm \ref{alg: greedy splitting}:} The time complexity of Step 1 and Step 2  of Algorithm \ref{alg: greedy splitting} are $NM$ and $(K-N)NM$, respectively. Therefore, the time complexity of Algorithm \ref{alg: greedy splitting} is $O(KNM)$.

\vspace{-10 pt}
\subsection{Handing the case $K<N$: greedy-merge algorithm}
\label{apd: handling K < N}
\vspace{-5 pt}
\begin{footnotesize}
\begin{algorithm}
\caption{ Greedy-merge algorithm.}
\label{alg: greedy merge}
\begin{algorithmic}[1]
\State{\textbf{Input}: Dataset $Y=\{y_1,\dots,y_M\}$ and  $p(x_i,y_j)$}.

\State{\textbf{Output}: Partition $Z=\{z_1,z_2,\dots,z_K\}$}. 

\State{\hspace{0.2 in} \textbf{Step 1}: Run Algorithm \ref{alg: finding emax} to obtain $N$-partition $z_1,z_2,\dots,z_N$.}

\State{\hspace{0.2 in} \textbf{Step 2}: Greedy merge.}

\State{\hspace{0.4 in} $t=0$}

\State{\hspace{0.4 in} \textbf{While:} $t \leq N-K$}

\State{\hspace{0.6 in} \textbf{For}: $i=1,2,\dots,K-t-1$}

\State{\hspace{0.8 in} \textbf{For}: $j=i+1,i+2,\dots,K-t$}

\State{\hspace{0.8 in} Merge $z_i$, $z_j$ to $z_{ij}$ and compute:}
\begin{equation}
   \mathbf{p}_{\mathbf{x},z_{ij}}=\mathbf{p}_{\mathbf{x},z_{i}}+ \mathbf{p}_{\mathbf{x},z_{j}},
\end{equation}
\begin{equation}
   F(\mathbf{p}_{\mathbf{x},z_{ij}})=\sum_{k=1}^{N}p(x_k,z_{ij})\sum_{k=1}^{N} f \Big( \dfrac{p(x_k,z_{ij})}{\sum_{i=1}^{N}p(x_k,z_{ij})}\Big),
\end{equation}
\State{\hspace{0.8 in}}
\begin{equation}
\label{eq: delta}
     \Delta_{ij}=F(\mathbf{p}_{\mathbf{x},z_{ij}})- F(\mathbf{p}_{\mathbf{x},z_{i}})-F(\mathbf{p}_{\mathbf{x},z_{j}}). 
\end{equation}
\State{\hspace{0.8 in}\textbf{End For }}
\State{\hspace{0.6 in}\textbf{End For }}

\State{\hspace{0.6 in}Return the best merge that minimizes the impurity loss $\Delta_{ij}$:}
\begin{equation}
\label{eq: delta merge}
    i^*,j^*=\min_{i,j} \Delta_{ij}.
\end{equation}

\State{\hspace{0.4 in}\textbf{End While }}
\State{\hspace{0.2 in} \textbf{Step 3}: Return $K$ partitions.}

\State{\textbf{Return:} Return $K$ partitions.}
\end{algorithmic}
\end{algorithm}
\end{footnotesize}

To resolve the high time complexity of Algorithm \ref{alg: finding emax} when $K<N$, we propose a so-called greedy-merge algorithm (Algorithm \ref{alg: greedy merge}) to reduce the time complexity. Particularly, we first use the Algorithm \ref{alg: finding emax} for $K=N$ to obtain $N$-partition. Next, we perform $N-K$ times of the greedy-merge, each time the algorithm merges two partitions into one single partition that minimizes the impurity loss until exact $K$ partitions are obtained ($K<N$). As will be shown later, the time complexity of this greedy-merge algorithm is $O((N-K)N^2+NM)$ that is linear in $M$ and polynomial in $N$. Although the greedy-merge algorithm does not satisfy the theoretical bounds, it performance  is comparable to the results provided by the proposed algorithm in \cite{cicalese2019new}. Please see the numerical results in Sec. \ref{sec: simulation}.

\textbf{Algorithm: } As discussed earlier, greedy-merge algorithm first uses Algorithm \ref{alg: finding emax} to generate $N$-partition (line 3, Algorithm \ref{alg: greedy merge}). Next, greedy-merge algorithm performs $(N-K)$ greedy merges, each time the algorithm merges two partitions into one single partition that minimizes the impurity loss $\Delta$ (line 10 and 13, Algorithm \ref{alg: greedy merge}) until exact $K$ partitions are obtained. The pseudo-code of our greedy-merge algorithm is provided in Algorithm \ref{alg: greedy merge}.

\textbf{Time complexity of Algorithm \ref{alg: greedy merge}:} The  time complexities of Step 1 and 2 in Algorithm \ref{alg: greedy merge} are $NM$ and $(N-K)N^2$, respectively. Thus, the time complexities of Algorithm \ref{alg: greedy merge} is $O(NM+(N-K)N^2)$. 
\vspace{-10 pt}
\subsection{Reaching to the local optimal solutions}
In \cite{coppersmith1999partitioning}, the authors proposed a necessary condition for which the partition is optimal. As the result, an iterative based $k$-means algorithm with a suitable distance can be used to find locally optimal partitions  \cite{coppersmith1999partitioning}, \cite{chou1991optimal}, \cite{cicalese2019new}, \cite{zhang2016low}, \cite{banerjee2005clustering}. On the other hand, although our proposed algorithms can approximate globally optimal solution well, there is no guarantee that the produced partitions are optimal i.e.,  the produced partitions might not satisfy the optimality condition in Theorem 1 of \cite{coppersmith1999partitioning}. Therefore, one can always perform iterative algorithms over the  partitions produced by Algorithm \ref{alg: finding emax}, Algorithm \ref{alg: greedy splitting} and Algorithm \ref{alg: greedy merge}  to obtain locally optimal solutions. This optional step will improve the quality of our proposed algorithms at the expense of increased time complexity $O(TKNM)$ where $T$ is the number of iterations.

\section{Numerical results}
\label{sec: simulation}
We used two datasets: 20NEWS and RCV1.  These are widely used for evaluating text classification methods \cite{cicalese2019new}. Existing algorithms \cite{coppersmith1999partitioning}, \cite{chou1991optimal}, \cite{zhang2016low}, \cite{banerjee2005clustering} can only find locally optimal solutions.  To approximate a globally optimal solution, many iterative algorithms use multiple random starting points and select the best solution. To that end, we compare the impurity provided by Algorithm \ref{alg: greedy splitting} when $K \geq N$ and Algorithm \ref{alg: greedy merge} when $K<N$ with the impurity produced by running the iterative algorithms 100 times from 100 randomly starting points. The details of these iterative algorithms can be viewed in \cite{coppersmith1999partitioning}, \cite{chou1991optimal}, \cite{cicalese2019new}, \cite{zhang2016low}, \cite{banerjee2005clustering}. 
Although these iterative algorithms do not guarantee to find a globally optimal solution, their performances were shown in \cite{banerjee2005clustering} to outperform the well-known Agglomerative Clustering method   in \cite{slonim2001power}, \cite{slonim2000agglomerative}. 

\begin{figure}
  \centering
  $\begin{array}{c}
    \includegraphics[width=.45\textwidth]{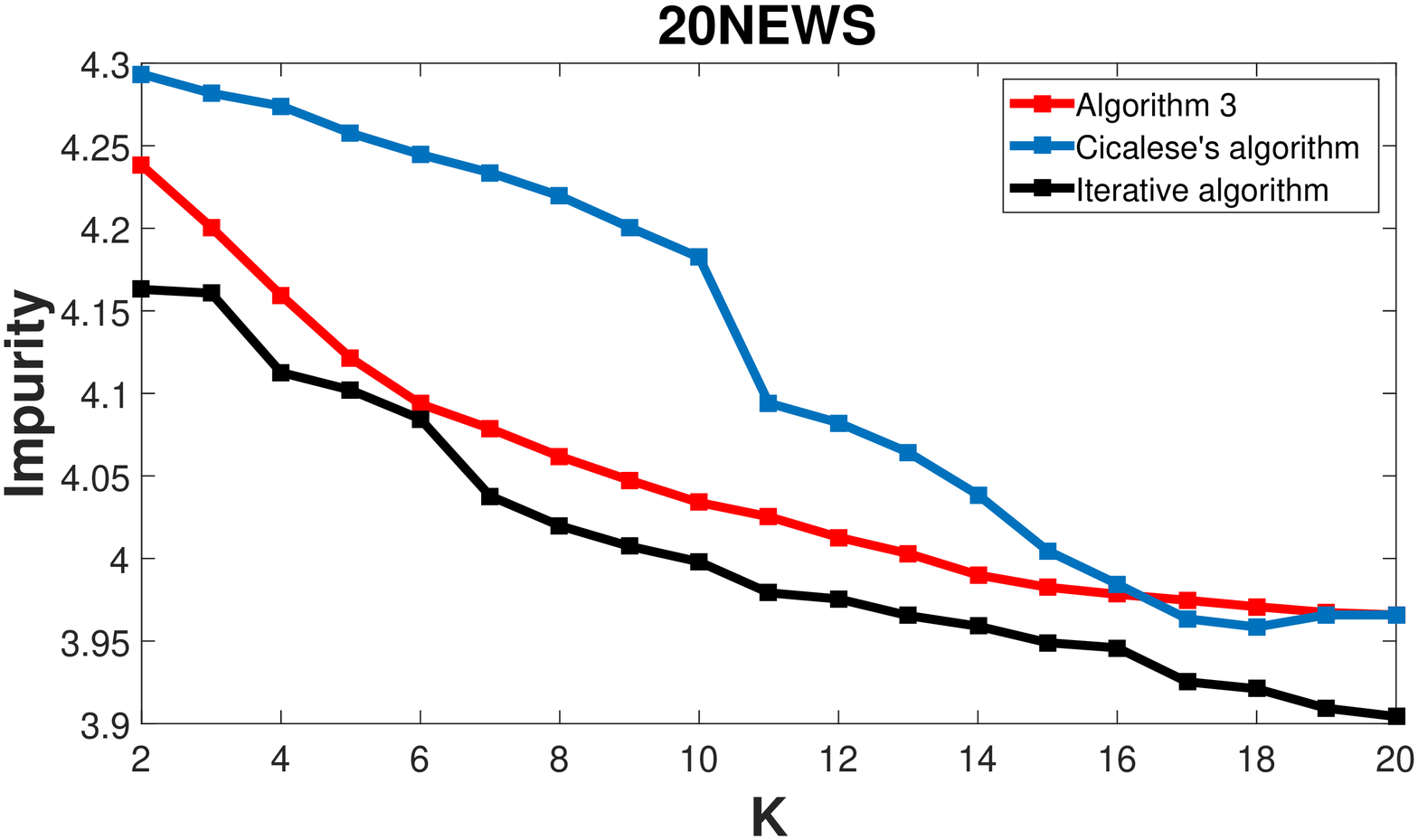} \\
  (a)\\
  \includegraphics[width=.45\textwidth]{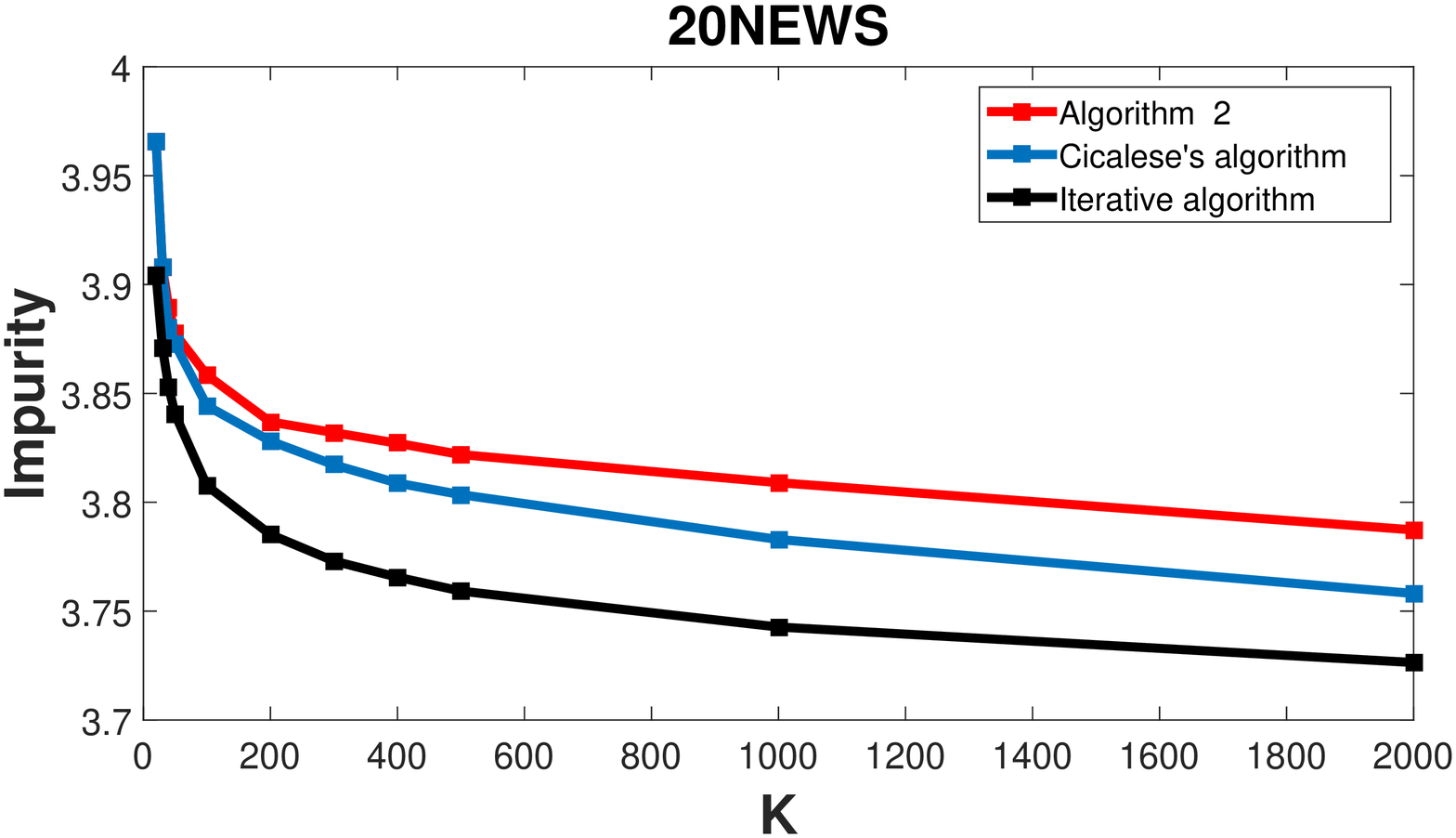} \\
  (b)
  \end{array}$
  \vspace{-0.1 in}
  \caption{Simulation results using 20NEWS dataset: (a) Algorithm 3 vs. the proposed algorithm in \cite{cicalese2019new} when $K<N$; (b) Algorithm 2 vs. the proposed algorithm in \cite{cicalese2019new} when $K \geq N$.  }\label{fig: simulation}
\end{figure}
\begin{figure}
  \centering
  $\begin{array}{c}
   \includegraphics[width=.45\textwidth]{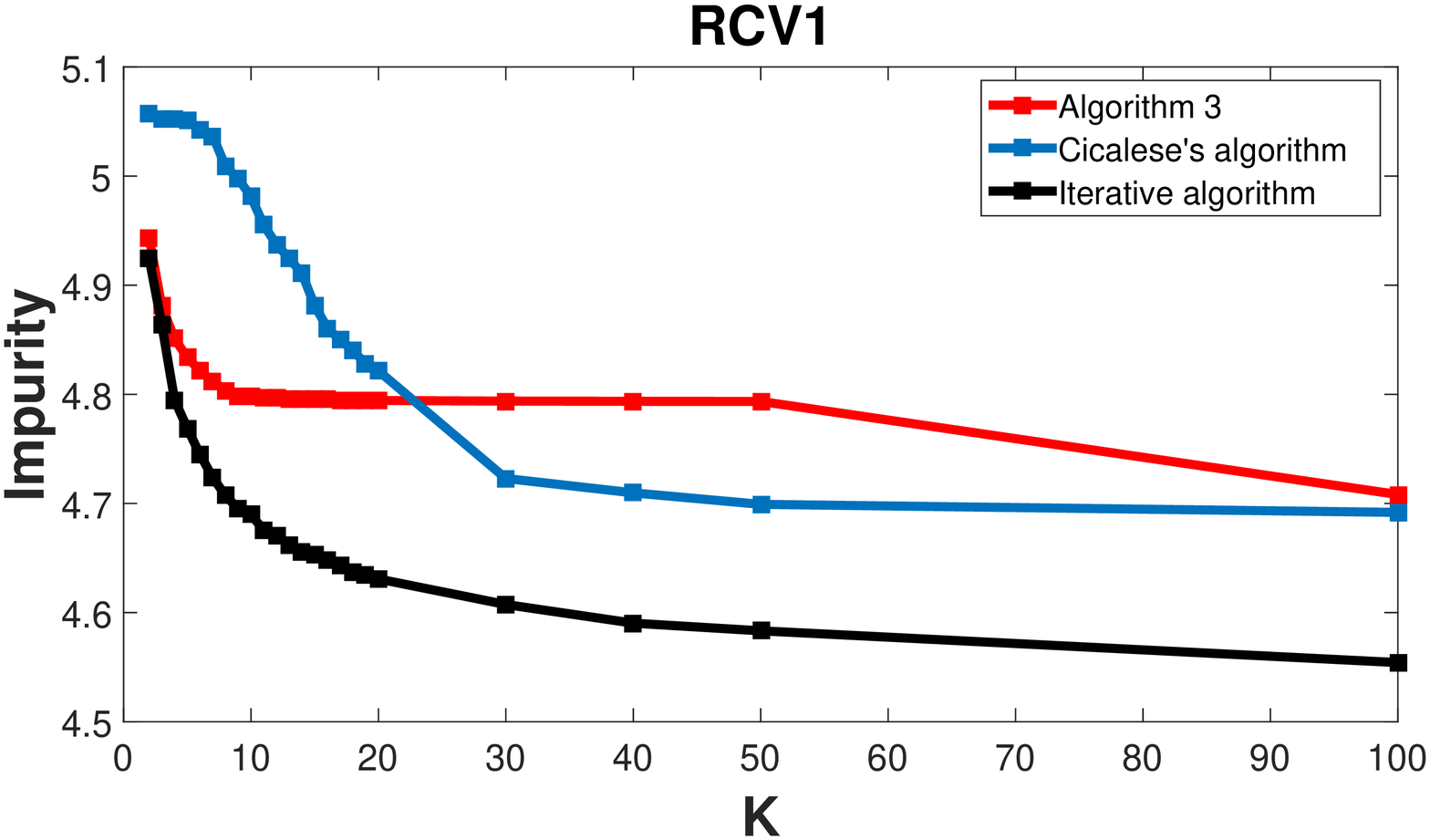} \\
  (a)\\ 
 \includegraphics[width=.45\textwidth]{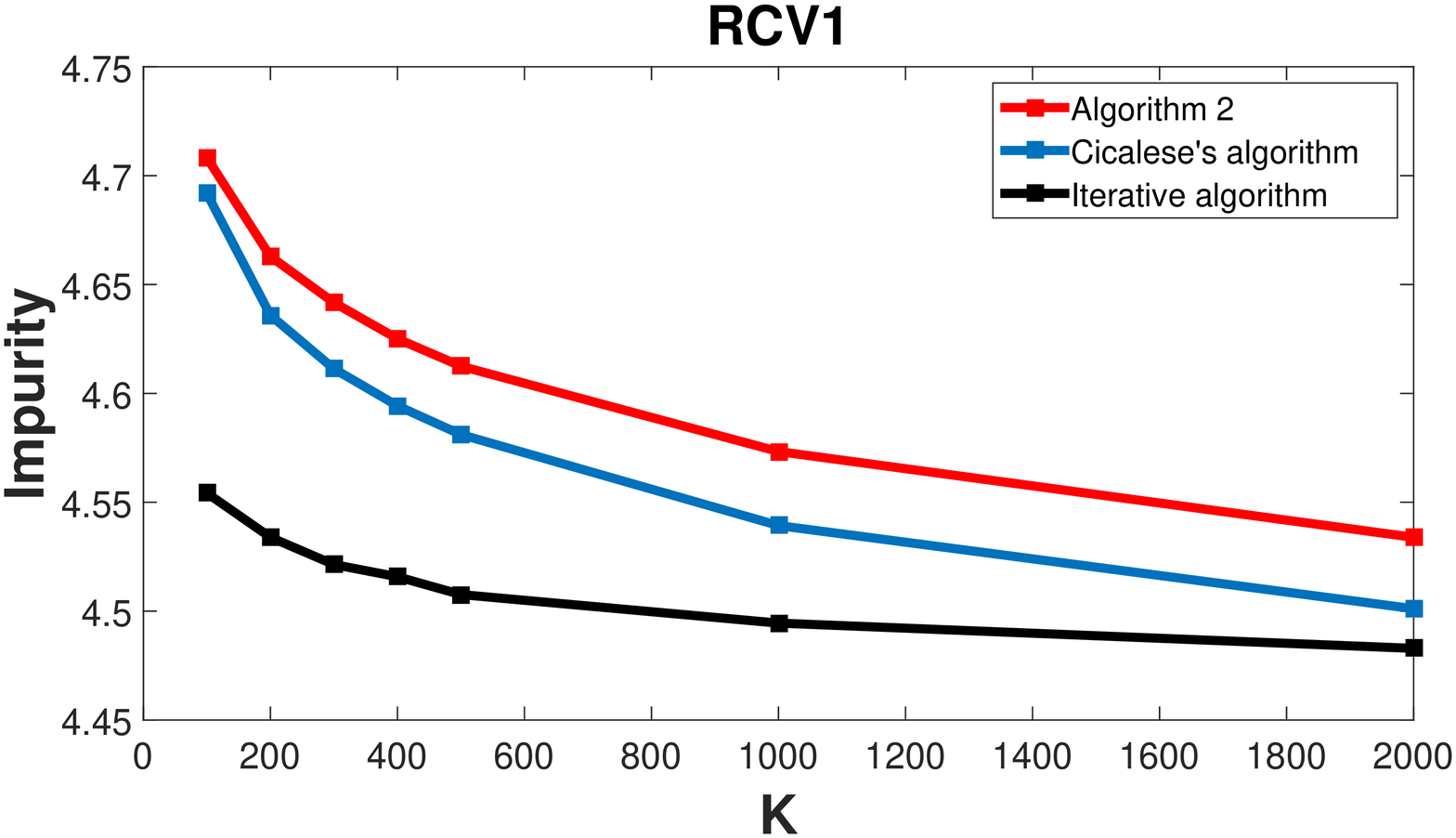} \\
   (b)
  \end{array}$
    \vspace{-0.1 in}
  \caption{Simulation results using RCV1 dataset: (a) Algorithm 3 vs. the proposed algorithm in \cite{cicalese2019new} when $K<N$; (b) Algorithm 2 vs. the proposed algorithm in \cite{cicalese2019new} when $K \geq N$.  }\label{fig: simulation 2}
\end{figure}

20NEWS dataset includes 18.846 documents evenly divided into 20 disjoint classes while RCV1 dataset includes 804,414 documents assigned to 103 different classes.  
Since both our algorithms and iterative algorithms use the joint distribution dataset, one wants to normalize the raw data in 20NEWS and RCV1 to a joint distribution $p(x_i,y_j)$, for example, by counting the number of times that a word $y_j$ appears in document $x_i$. For convenience, we utilize the normalized datasets in \cite{cicalese2019new}. 
After normalized, the dataset 20NEWS contains $M= 51840$ vectors of dimension $N=20$ while the dataset  RCV1 has $M=170946$ vectors of dimension $N=103$. Our code as well as the datasets are available at  \url{https://github.com/hoangle96/linear_clustering}.  

Next, we run the proposed algorithms (Algorithm \ref{alg: greedy splitting} and \ref{alg: greedy merge} corresponding to the case of $K \geq N$ and $K<N$, respectively), the iterative algorithm, and the ratio-greedy algorithm in \cite{cicalese2019new} for $K=2,3,4,5,\dots,2000$ using  both 20NEWS dataset and RCV1 dataset. 
Figs. \ref{fig: simulation} and \ref{fig: simulation 2} illustrate the impurity provided by our proposed Algorithm \ref{alg: greedy splitting} and \ref{alg: greedy merge}, the iterative algorithms, and the algorithm in \cite{cicalese2019new} for 20NEWS and RCV1 datasets. 
As seen, the impurities resulted from our proposed algorithms are very close to the impurity obtained from the iterative algorithms (assuming that the iterative algorithm obtains a globally optimal solution). Particularly, the impurities resulted by our proposed algorithms are at most $1.0181$ time larger than the impurity provided by running the iterative algorithms $100$ times on 20NEWS dataset and at most $1.0459$ time larger for the RCV1 dataset. In addition, the impurities provided by our algorithms (the red curves) are comparable to the impurities obtained by the proposed algorithm in \cite{cicalese2019new} (the blue curves) as illustrated in Fig. \ref{fig: simulation} and Fig. \ref{fig: simulation 2}. Particularly, Fig. \ref{fig: simulation} and Fig. \ref{fig: simulation 2} show out that our proposed algorithms outperform the proposed algorithm in \cite{cicalese2019new} if the number of partitions $K$ is small. When $K$ is large (for example $K \geq 19$ for 20NEWS and $K\geq 24$ for RCV1), the algorithm in \cite{cicalese2019new} provides lower impurities, but our proposed algorithm has theoretically lower time complexities than other algorithms.


 \section{Conclusion}
 \label{sec: conclusion}
In this paper, we propose a new algorithm with bounded guarantee splitting quality based on the maximum likelihood principle for minimizing a wide class of impurity function including entropy and Gini index.  Under certain conditions, we show that the proposed algorithm is better than the state-of-art algorithms in both terms of computational complexity and the quality of partitioned outputs. We also provide two heuristic algorithms that works well in practice.  In addition, our new upper and lower bounds generalize two well-known results in information theory and signal processing, namely the Fano's inequality and the Boyd-Chiang's upper bound of channel capacity.

\bibliographystyle{unsrt}
\bibliography{sample}

\appendix
\small

\subsection{Improvement of Algorithm in \cite{cicalese2019new}}
\label{apd: time complexity analysis}

In Section V \cite{cicalese2019new}, Cicalese \textit{et al.} proposed an algorithm that provably achieves near-optimal partition for entropy impurity. This algorithm has two steps: (1) performing a projection to transfer the multidimensional data back to  a 2-dimensional data, and (2) using dynamic programming to find the optimal partition in 2-dimensional data based on the idea in \cite{kurkoski2014quantization}.

Cicalese \textit{et al.} proved that the time comlexity of the algorithm in \cite{cicalese2019new} is polynomial, however, no precise complexity is constructed. Since the time for projection the original data to a 2-dimension data is $NM$ and the time of finding the optimal partition in 2-dimensional space using the method in \cite{kurkoski2014quantization} is $O(M^3)$, the time complexity of the algorithm in \cite{cicalese2019new} should be at least $O(NM+M^3)$. 

Based on the well-known SMAWK algorithm \cite{aggarwal1987geometric}, we show that the time complexity of the algorithm in \cite{cicalese2019new} can be further reduced from $O(M^3)$ to $O(M \log M)$. Indeed, the SMAWK algorithm can be applied to reduce the time complexity of algorithm in \cite{kurkoski2014quantization} to $O(KM)$ if the binary data is ordered (see \cite{iwata2014quantizer} and \cite{he2019dynamic} for detail).
However, to order a data of size $M$, the fastest sorting technique requires  $O(M \log M)$. Thus, the problem in \cite{kurkoski2014quantization} can be solved in $O(M \log M)$ that reduces the polynomial time complexity in step (2)  of algorithm in \cite{cicalese2019new} to $O(M \log M)$. The total time complexity of the proposed algorithm in \cite{cicalese2019new}, therefore, is $O(NM+M^3)$. 

\subsection{ Jensen's Inequality}
\label{apd: jensen inequality}

Jensen inequality states  that for a random variable $T$, then
$\mathbb{E}[f(T)] \geq  f(\mathbb{E}[T])$ if $f(x)$ is convex, and 
$\mathbb{E}[f(T)] \leq  f(\mathbb{E}[T])$  if $f(x)$ is concave.

Now, let $T \in \{t_1,t_2,\dots,t_K\}$ be a random variable with the uniform distribution $\dfrac{1}{K}$. If $f(x)$ is concave,  using Jensen's inequality:
\begin{equation*}
  f(\dfrac{\sum_{i=1}^{K}t_i}{K}) \geq \sum_{i=1}^{K}\dfrac{1}{K}f(t_i), 
\end{equation*}
which is equivalent to:
\begin{equation}
\label{eq: appendix b}
  K f(\dfrac{\sum_{i=1}^{K}t_i}{K}) \geq \sum_{i=1}^{K}f(t_i).  
\end{equation}
Thus, (\ref{eq: 15}) is a direct result of (\ref{eq: appendix b}) using $t_i=p(x_i|z_j)$ and $K=N-1$. 

\subsection{Fano's Inequality}
\label{apd: fano}

If the impurity function is entropy  i.e.,  $f(x)=-x\log x$, then: 
\begin{align*}
    I_Q&=\sum_{k=1}^{K} \sum_{i=1}^{N} p(z_k) \Big(-p(x_i|z_k) \log (p(x_i|z_k))\Big)\\
    &=\sum_{k=1}^{K} p(z_k)H(X|z_k)=H(X|Z).
\end{align*}
By plugging $f(x)=-x\log x$ into (\ref{eq: 18}):
\begin{eqnarray}
H(X|Z) &\leq& -e_Q \log e_Q - (N-1)\dfrac{1-e_Q}{N-1} \log \dfrac{1-e_Q}{N-1} \nonumber\\
&=&-[e_Q \log e_Q +(1-e_Q) \log(1-e_Q)]\nonumber\\&+& (1-e_Q) \log (N-1) \label{eq: 19}\\
&=& H(1-e_Q) + (1-e_Q) \log (N-1),  \label{eq: 20}
\end{eqnarray}
with (\ref{eq: 19}) is due to a bit of algebra, (\ref{eq: 20}) is due to the binary entropy function is symmetric, i.e.,  $H(e_Q)=H(1-e_Q)=-[e_Q \log e_Q +(1-e_Q) \log(1-e_Q)]$. 

Let us consider $X$ and $Z$ as two random variables that represent the input and the output symbols of a communication channel. Errors might occur during transmissions.   Suppose that the receiver estimates the transmitted symbol based on the received $z_k$ as $x_{k^*}$ where $k^*=\argmax_{i}  p(x_i|z_k)$ (maximum likelihood decoding). Thus, the error probability of this decoding scheme is $P_e=1-\sum_{k=1}^{K}p(z_k)p(x_{k^*}|z_k)=1-e_Q$.  Then, 
\begin{eqnarray*}
H(X|Z) &\leq& H(1-e_Q)+ (1-e_Q) \log (N-1) \\
&=& H(P_e)+ P_e \log (|X|-1),
\end{eqnarray*}
which is identical to the well-known Fano's inequality \cite{cover2012elements}. 

\subsection{ Boyd-Chiang's Upper Bound of Channel Capacity}
\label{apd: boyd-chiang}
The mutual information $I(X;Z)$ between channel input and channel output is defined by $I(X;Z)=H(X)-H(X|Z)$.  However, from Theorem \ref{theorem: 2}, if the impurity function is entropy i.e., $f(x)=-x \log (x)$ and $l(x)=-\log (x)$ then $H(X|Z)=I_Q \geq l(e_Q)=-\log (e_Q)$. Now, by using the uniform input distribution,
\begin{eqnarray}
I(X;Z)&=&H(X)-H(X|Z)\nonumber\\
&\leq& H(\dfrac{1}{N},\dfrac{1}{N},\dots,\dfrac{1}{N})\nonumber\\&-&l(\sum_{k=1}^{K}p(z_k)p(x_{k^*}|z_k))\label{eq: ha 5}\\
&=& \log N - l(\sum_{k=1}^{K}p(x_{k^*}) p(z_k|x_{k^*}))\label{eq: ha 6}\\
&=&\log N - l(\dfrac{1}{N}\sum_{k=1}^{K}p(z_k|x_{k^*}))\label{eq: ha 7}\\
&=&\log N+\log (\dfrac{\sum_{k=1}^{K}p(z_k|x_{k^*})}{N})\label{eq: ha 8}\\
&=& \log (\sum_{k=1}^{K}p(z_k|x_{k^*})), \label{eq: ha 9}
\end{eqnarray}
with (\ref{eq: ha 5}) is due to $H(X|Z) \geq l(e_Q)$ and $e_Q=\sum_{k=1}^{K}p(z_k)p(x_{k^*}|z_k)$, (\ref{eq: ha 6}) is due to Bayes's theorem, (\ref{eq: ha 7}) is due to the input distribution is uniform, (\ref{eq: ha 8}) is due to $l(x)=-\log (x)$, (\ref{eq: ha 9}) is due to a bit of algebra. 

Let us now consider $X$ and $Z$ as two random variables that represent the input and the output symbols of a communication channel. Due to the errors during transmissions,  a channel matrix $A$ whose entry $A_{ij}=p(z_j|x_i)$ denotes the probability of the transmitter transmitted symbol $x_i$ but the receiver decoded to symbol $z_j$.  Now, since the input distribution is uniform, from $p(z_k)p(x_{k^*}|z_k)=p(x_{k^*})p(z_k|x_{k^*})$, then $p(z_k|x_{k^*})$ is the largest entry in  ${k^*}^{th}$ column of channel matrix. Thus, the upper bound of channel capacity is $\log (\sum_{k=1}^{K}p(z_k|x_{k^*}))$ that is identical to the bound constructed by Boyd and Chiang \cite{chiang2004geometric}. 

\subsection{Proof of Theorem \ref{theorem: monotonic decreasing of upper bound}}
\label{apd: proof of theorem monotonic upper bound}
\begin{proof} We show that $u(e_Q) = f(e_Q)+(N-1)f(\dfrac{1-e_Q}{N-1})$ is a non-increasing function. 
\begin{eqnarray}
u'(e_Q)&=&f'(e_Q)-(N-1)\dfrac{1}{N-1}f'(\dfrac{1-e_Q}{N-1})\\
&=&f'(e_Q)-f'(\dfrac{1-e_Q}{N-1}).
\end{eqnarray}
Since $k^*=\argmax_i p(x_i|z_k)$ and $\sum_{i=1}^{N}p(x_i|z_k)=1$,  $p(x_{k^*}|z_k) \geq \dfrac{1}{N}$. Thus, $$e_Q=\sum_{k=1}^{K}p(z_k)p(x_{k^*}|z_k) \geq \sum_{k=1}^{K}p(z_k) \dfrac{1}{N}=\dfrac{1}{N}.$$
Therefore, $1-e_Q \leq \dfrac{N-1}{N}$. Thus,
\begin{equation}
    e_Q \geq \dfrac{1}{N} \geq \dfrac{1-e_Q}{N-1}.
\end{equation}
Now since $f'(e_Q)$ is a non-increasing function due to $f(e_Q)$ is concave. Therefore, 
\begin{equation}
    u'(e_Q)=f'(e_Q)-f'(\dfrac{1-e_Q}{N-1}) \leq 0.
\end{equation}
Or, $u(e_Q)$ is a non-increasing function. 

Finally, it is possible to verify that if $e_Q=\dfrac{1}{N}$ or $e_Q=1$, then the upper bound is tight i.e.,  $u(e_Q)=I_Q$ for both the entropy impurity and the Gini index impurity. Indeed, if $e_Q=\dfrac{1}{N}$,  $p(x_{i}|z_k)=\dfrac{1}{N}$ $\forall i,k$ and then $u(e_Q)=I_Q=N f(\dfrac{1}{N})$. If $e_Q=1$, $p(x_{k^*}|z_k)=1$ and  $p(x_{i}|z_k)=0$ $\forall$ $i \neq k^*$ and then $u(e_Q)=I_Q=0$. 
\end{proof}

\subsection{Proof of Theorem \ref{theorem: p(x,y)}}
\label{apd: proof of theorem {theorem: p(x,y)}}

\subsubsection{Proof of Theorem \ref{theorem: p(x,y)}-(a)}
\begin{proof} We first consider the case when $K = N$, we show that $e_{Q_{e^{\max}}} \geq e_Q, \forall Q$. We have:
\begin{align*}
e_{Q_{e^{\max}}} &= \sum_{j^*=1}^K{p(z_{j^*}) \max_i{p(x_i|z_{j^*})}} \\  
    &= \sum_{j^*=1}^K{\max_i{p(x_i, z_{j^*})}} \\
    &= \sum_{j^*=1}^K{\sum_{j:Q(y_j) = z_{j^*}}{\max_i p(x_i,y_j)}} \\
    & \geq  \sum_{j=1}^K{\sum_{j:Q(y_j) = z_j}{p(x_i,y_j)}}  \\
    &=  e_{Q}.
\end{align*} 
Note that $Q(y_j) = z_j$ in the index of the sum in the last equation represents any arbitrary partition scheme.

We now show that if a quantizer $Q$  produces $e_{\max} = \max_Q {e_Q}$ then it must has the structure of $Q_{e^{\max}}$.
We will prove this by contradiction.  Suppose that a quantizer $Q$ produces the partitions $z_1,z_2,\dots,z_K $ that has $e^{\max}$, but there exists a  $y_n$ that is partitioned to $z_{l}$, such that $l \neq \argmax_{1\leq i \leq N}  p(x_i,y_n)$. Let $m =  \argmax_{1\leq i \leq N} p(x_i,y_n)$. Now, let consider a quantizer $Q'$ which is constructed from quantizer $Q$ by moving $y_n$ from $z_{l}$ to $z_{m}$. This new quantizer $Q'$ produces a new partition $\{z'_1,\dots,z'_{l},\dots ,z'_{m}, \dots, z'_K\}$ with $z'_k = z_k$, $\forall$ $k \neq l, m$, with corresponding $p'(x_i, z_k)$ and $p'(x_i| z_k)$. 

From the definition of $e_Q$, we have:
\begin{small}
\begin{align*}
    e_Q - e_Q'&= p(z_l)p(x_{l*}| z_l)+p(z_m)p(x_{m^*}|z_m)\\&- p'(z_l)p'(x_{l*}| z_l)- p'(z_m)p'(x_{m^*}|z_m)\\
   &= p(x_{l*},z_l)+p(x_{m^*},z_m) - p'(x_{l*}, z_l)- p'(x_{m^*},z_m)\\
		&=\sum_{j:Q(y_j) = z_l}{p(x_{l*}, y_j)} + \sum_{j:Q(y_j) = z_m}{p(x_{m*}, y_j)} \\
		&- \sum_{j:Q(y_j) = z_l}{p'(x_{l*}, y_j)} -\sum_{j:Q(y_j) = z_m}{p'(x_{m*}, y_j)}\\	
		&= p(x_{l*}, y_n) - p(x_{m*},y_n).
\end{align*}
\end{small}Since by assumption that $p(x_{m},y_n) > p(x_l, y_n)$, we have $e_{Q'} < e_{Q}$ which is a contradiction.  Thus,  any partition scheme that achieves $e^{\max}$ must have the structure of maximum likelihood of $Q_{e^{\max}}$.
\end{proof}

\subsubsection{Proof of Theorem \ref{theorem: p(x,y)}-(b)}
\begin{proof} Theorem \ref{theorem: p(x,y)}-(a) handled the case when $K = N$ and showed that any partition scheme that achieves $e^{\max}$ must have the structure of maximum likelihood of $Q_{e^{\max}}$. On the other hand, Theorem \ref{theorem: p(x,y)}-(b) finds the partition that achieves $e^{\max}$ when $K>N$. Interestingly, we show that the mapping of $Q_{e^{\max}}$ in Theorem \ref{theorem: p(x,y)}-(a) that partitions the data to $N$-nonempty partitions and $K-N$ empty partitions still produces $e^{\max}$. Let $j^* = \argmax_i p(x_i, y_j)$  and define quantizer $Q_{e^{\max}}$  with the following structure:
\begin{equation}
\label{eq: mapping k>n}
Q_{e^{\max}}(y_j) = z_{j^*},
\end{equation}
then $Q_{e^{\max}}$ produces $e^{\max} = \max_Q{e_Q}$ even if $K>N$. Moreover,  due to the mapping in (\ref{eq: mapping k>n}), $Q_{e^{\max}}$ produces $N$ nonempty partitions and $K-N$ empty partitions. 

Indeed, suppose the quantizer $Q_{e^{\max}}$ produces $K'$ nonempty partitions and $K \geq K'>N$. We show that there exists another quantizer $Q$ having exactly $N$ nonempty partitions which produces the same  $e^{\max}$ as $Q_{e^{\max}}$. Now, since $K'>N$, there exist two partitions $z_i,z_j$ such that: $$i^*=j^*=\argmax_{1 \leq t \leq N} p(x_t|z_i)=\argmax_{1 \leq t \leq N} p(x_t|z_j).$$

Next, consider a new quantizer $Q$ that maps two partitions $z_i$ and $z_j$ into a single partition $z_{k}$, we show that $Q$ still provides the same $e^{\max}$ as $Q_{e^{\max}}$.  Indeed, since $i^*=j^*$ and $z_i \cup z_j =z_k$, $z_i \cap z_j =\emptyset$, we have:
$$ i^*=j^*=k^*=\argmax_{1 \leq t \leq N} p(x_t|z_k),$$and
$$p(x_{k^*},z_k)=p(x_{i^*},z_i) + p(x_{j^*},z_j).$$Thus,
\begin{small}
\begin{eqnarray}
p(z_k) p(x_{k^*}|z_k)&\!=\!&p(x_{k^*},z_k)=p(x_{i^*},z_i) + p(x_{j^*},z_j)\\
&\!=\!& p(z_i)p(x_{i^*}|z_i) + p(z_j)p(x_{j^*}|z_j). 
\end{eqnarray}
\end{small}

By definition of $e_Q$ in (\ref{eq:i^*}) and noting that $Q$ is identical to $Q_{e^{\max}}$ except that two partitions $z_i$ and $z_j$ are grouped into a single partition $z_{k}$, $e_Q=e^{\max}$. By induction method, after at most $K'-N$ times grouping, there exist a quantizer $Q$ having exactly $N$ nonempty partitions which still produce $e^{\max}$. Moreover, this quantizer satisfies the mapping in (\ref{eq: mapping k>n}). 
\end{proof}

\subsection{Proof of Theorem \ref{theorem: 8}}
\label{apd: proof of theorem 8}
\begin{proof}
For the Gini index impurity function, $f(x)=x(1-x)$ and $l(x)=1-x$. Thus, 
\begin{eqnarray}
R(e^{\max})&=&\dfrac{f(e^{\max})+(N-1)f(\dfrac{1-e^{\max}}{N-1})}{l(e^{\max})}\nonumber\\
&=&\dfrac{e^{\max}(1-e^{\max})}{1-e^{\max}} \nonumber\\&+&\dfrac{(N-1) \dfrac{1-e^{\max}}{N-1}(1-\dfrac{1-e^{\max}}{N-1})}{1-e^{\max}} \label{eq: v1}\\
&=& e^{\max}+ 1-\dfrac{1-e^{\max}}{N-1} \label{eq: v2}\\
&\leq & e^{\max}+1 \label{eq: v3}\\
&\leq & 2 \label{eq: v4},
\end{eqnarray}
with (\ref{eq: v1}) due to $f(x)=x(1-x)$ and $l(x)=1-x$, (\ref{eq: v2}) and (\ref{eq: v3}) due to a bit of algebra, (\ref{eq: v4}) due to $e^{\max} \leq 1$. Noting that one can use $e^{\max}+1$ as another approximation for Gini index impurity. 
\end{proof}

\subsection{Proof of Theorem \ref{theorem: 9}}
\label{apd: proof of theorem 9}

\begin{proof}
For entropy impurity, $f(x)=-x\log(x)$ and $l(x)=-\log(x)$, plug in the upper bound and the lower bound in Theorem \ref{theorem: 1} and Theorem \ref{theorem: 2}, we have:
\begin{equation}
    R(e^{\max})=\dfrac{H(e^{\max})+(1-e^{\max})\log(N-1)}{-\log(e^{\max})}.
\end{equation}
Since $\log(N-1) < \log N$, we have:
\begin{eqnarray}
\label{eq: 201}
    R(e^{\max})&=&\dfrac{H(e^{\max})+(1-e^{\max})\log(N-1)}{-\log(e^{\max})}\nonumber\\&<& \dfrac{H(e^{\max})+(1-e^{\max})\log N}{-\log(e^{\max})}.
\end{eqnarray}
To prove Theorem \ref{theorem: 9}, we want to show that the inequality below holds.
\begin{equation}
\label{eq: 300}
    \dfrac{H(e^{\max})+(1-e^{\max})\log N}{-\log(e^{\max})} \leq \log^2 N, \forall N \geq N^{\min}.
\end{equation}
 This is equivalent to show that: 
 \begin{small}
 \begin{equation*}
    \log^2 N(\!-\!\log(e^{\max}))\!-\!(1\!-\!e^{\max})\log N\!-\!H(e^{\max}) \geq 0, \forall N \geq N^{\min}. 
 \end{equation*}
 \end{small}Indeed, using a bit of algebra,
\begin{small}
\begin{eqnarray*}
&&\log^2 N(-\log(e^{\max}))-(1-e^{\max})\log N-H(e^{\max})\\
&=&-\!\log(e^{\max})[\log^2 \!N \!- \!2\log \!N \dfrac{1-e^{\max}}{2(\!-\!\log(e^{\max}))}\!\\&+&\! (\dfrac{1-e^{\max}}{2(\!-\!\log(e^{\max}))})^2\!-\!\dfrac{H(e^{\max})}{\!-\!\log(e^{\max})}
\!-\!(\dfrac{1-e^{\max}}{2(\!-\!\log(e^{\max}))})^2] \\
&=& -\log(e^{\max})[(\log N - \dfrac{1-e^{\max}}{2(-\log(e^{\max}))})^2 \\&-& \dfrac{4 H(e^{\max})(-\log(e^{\max}))+(1-e^{\max})^2}{(-2 \log(e^{\max}))^2}]\\
&=& -\log(e^{\max})[(\log N \!-\! \dfrac{1-e^{\max}}{2(-\log(e^{\max}))})^2 \!\\&-&  \! (\dfrac{\sqrt{4 H(e^{\max})(-\log(e^{\max}))+(1-e^{\max})^2}}{-2 \log(e^{\max})})^2].
\end{eqnarray*}
\end{small}Now, if: 
\begin{eqnarray}
    \log N &\geq& \dfrac{1-e^{\max}}{-2\log (e^{\max})} \nonumber\\&+& \dfrac{\sqrt{4 H(e^{\max})(-\log(e^{\max}))+(1-e^{\max})^2}}{-2 \log(e^{\max})}\nonumber\\&=&S(e^{\max}),
\end{eqnarray}
then (\ref{eq: 300}) holds. Thus, $R(e^{\max}) < \log^2 N$ holds if $N \geq 2^{S(e^{\max})}=N^{\min}$. 
\end{proof}

\end{document}